%%
%% Copyright 2007, 2008, 2009 Elsevier Ltd
%%
%% This file is part of the 'Elsarticle Bundle'.
%% ---------------------------------------------
%%
%% It may be distributed under the conditions of the LaTeX Project Public
%% License, either version 1.2 of this license or (at your option) any
%% later version.  The latest version of this license is in
%%    http://www.latex-project.org/lppl.txt
%% and version 1.2 or later is part of all distributions of LaTeX
%% version 1999/12/01 or later.
%%
%% The list of all files belonging to the 'Elsarticle Bundle' is
%% given in the file `manifest.txt'.
%%

%% Template article for Elsevier's document class `elsarticle'
%% with harvard style bibliographic references
%% SP 2008/03/01
%%
%%
%%
%% $Id: elsarticle-template-harv.tex 4 2009-10-24 08:22:58Z rishi $
%%
%%
\documentclass[preprint,authoryear,12pt]{elsarticle}
%% Use the option review to obtain double line spacing
%% \documentclass[authoryear,preprint,review,12pt]{elsarticle}

%% Use the options 1p,twocolumn; 3p; 3p,twocolumn; 5p; or 5p,twocolumn
%% for a journal layout:
%% \documentclass[final,authoryear,1p,times]{elsarticle}
%% \documentclass[final,authoryear,1p,times,twocolumn]{elsarticle}
%% \documentclass[final,authoryear,3p,times]{elsarticle}
%% \documentclass[final,authoryear,3p,times,twocolumn]{elsarticle}
%% \documentclass[final,authoryear,5p,times]{elsarticle}
%% \documentclass[final,authoryear,5p,times,twocolumn]{elsarticle}

%% if you use PostScript figures in your article
%% use the graphics package for simple commands
%% \usepackage{graphics}
%% or use the graphicx package for more complicated commands
%% \usepackage{graphicx}
%% or use the epsfig package if you prefer to use the old commands
\usepackage{verbatim} 
%% The amssymb package provides various useful mathematical symbols
\usepackage{amssymb}
%% The amsthm package provides extended theorem environments
\usepackage{amsmath}
\usepackage{amsthm}
\usepackage{epstopdf}
\usepackage{setspace}
\usepackage{rotating}
\usepackage[T1]{fontenc}
\usepackage{subfig}
\usepackage{algorithm2e}
\usepackage{chngcntr} %for appendix
\usepackage{lineno}
\usepackage{setspace} %need to comment out
\usepackage[framed,numbered,autolinebreaks,useliterate]{mcode}
\newtheorem{definition}{Definition}[section]%definition labelled with section

\newtheorem{theorem}{Theorem}[section]

\numberwithin{equation}{section}
%\usepackage[firstpage]{draftwatermark}
%\SetWatermarkText{Authors review}
%\SetWatermarkScale{2}
\usepackage{rotating}
%% The lineno packages adds line numbers. Start line numbering with
%% \begin{linenumbers}, end it with \end{linenumbers}. Or switch it on
%% for the whole article with \linenumbers after \end{frontmatter}.
%% \usepackage{lineno}

%% natbib.sty is loaded by default. However, natbib options can be
%% provided with \biboptions{...} command. Following options are
%% valid:

%%   round  -  round parentheses are used (default)
%%   square -  square brackets are used   [option]
%%   curly  -  curly braces are used      {option}
%%   angle  -  angle brackets are used    <option>
%%   semicolon  -  multiple citations separated by semi-colon (default)
%%   colon  - same as semicolon, an earlier confusion
%%   comma  -  separated by comma
%%   authoryear - selects author-year citations (default)
%%   numbers-  selects numerical citations
%%   super  -  numerical citations as superscripts
%%   sort   -  sorts multiple citations according to order in ref. list
%%   sort&compress   -  like sort, but also compresses numerical citations
%%   compress - compresses without sorting
%%   longnamesfirst  -  makes first citation full author list
%%
%% \biboptions{longnamesfirst,comma}
% \biboptions{}
%
%  new package added
\usepackage{epstopdf}
\usepackage{adjustbox}
\usepackage{blindtext}
%\journal{Applied Mathematics and Computation} 
\journal{arXiv.org}
%https://ees.elsevier.com/cam/default.asp
\begin{document}
\begin{frontmatter}
%% Title, authors and addresses
%% use the tnoteref command within \title for footnotes;
%% use the tnotetext command for the associated footnote;
%% use the fnref command within \author or \address for footnotes;
%% use the fntext command for the associated footnote;
%% use the corref command within \author for corresponding author footnotes;
%% use the cortext command for the associated footnote;
%% use the ead command for the email address,
%% and the form \ead[url] for the home page:
%%
%% \title{Title\tnoteref{label1}}
%% \tnotetext[label1]{}
%% \author{Name\corref{cor1}\fnref{label2}}
%% \ead{email address}
%% \ead[url]{home page}
%% \fntext[label2]{}
%% \cortext[cor1]{}
%% \address{Address\fnref{label3}}
%% \fntext[label3]{}
%YIN
\title{A novel algorithm to get the Fourier power spectra of a real sequence}
%% use optional labels to link authors explicitly to addresses:
%% \author[label1,label2]{<author name>}
%% \address[label1]{<address>}
%% \address[label2]{<address>}
\author{Jiasong Wang$^{1}$, Changchuan Yin$^{2,\ast}$}
\address{1.Department of Mathematics, Nanjing University, Nanjing, Jiangsu 210093, China\\
2.Department of Mathematics, Statistics, and Computer Science, The University of Illinois at Chicago, Chicago, IL 60607-7045, USA\\
$\ast$ Corresponding author, email: cyin1@uic.edu
}
%\doublespacing
\begin{abstract}
For a real sequence of length of $m = nl,$ we may deduce its congruence derivative sequence with length of $l$. The discrete Fourier transform of original sequence can be calculated by the discrete Fourier transform of the congruence derivative sequence. Based on the relation of discrete Fourier transforms between the two sequences, the features of Fourier power spectra of the integer and fractional periods for a real sequence have been investigated. It has proved mathematically that after calculating the Fourier power spectrum at an integer period, the Fourier power spectra of the fractional periods associated this integer period can be easily represented by the computational result of the Fourier power spectrum at the integer period for the sequence.  A computational experience using a protein sequence shows that some of the computed results are a kind of Fourier power spectra corresponding to new frequencies which can${\text{'}}$t be obtained from the traditional discrete Fourier transform. Therefore, the algorithm would be a new realization method for discrete Fourier transform of the real sequence. 
\end{abstract}
\begin{keyword}
Fourier power spectrum \sep real sequence \sep congruence derivative sequence \sep integer periods \sep associated fractional periods 
\end{keyword}
\end{frontmatter}
%\doublespacing
%\linenumbers
\section{Introduction}
\label{sec1} 
%021019
Digital signal processing (DSP) techniques have been widely applied in the periodicity analysis of time series and becomes a major research technique for bioinformatics \citep{anastassiou2001genomic,chen2003will}. The main requisite in applying signal processing to symbolic sequences is to map the sequences onto numerical time series. For example, the Voss \citep{voss1992evolution} or Z-curve \citep{zhang1994z} representations may encode a symbolic DNA sequence as a numerical sequence, and hydrophobicity mapping may transform a protein sequence to numerical one \citep{kyte1982simple}. After numerical mapping, DSP methods can be employed to study features, structures, and functions of the symbolic sequences. The most common signal processing approach is Fourier transform \citep{welch1967use}, which has been commonly used to study periodicity and repetitive regions in symbolic DNA sequences \citep{silverman1986measure,anastassiou2001genomic}, as well as in genome comparison \citep{yin2014measure,yin2015improved}. We have surveyed the mathematical properties of the Fourier spectrum for symbolic sequences \citep{wang2014SNR}. For a review of the DSP methods for the study of biological sequences, one may refer to our book \citep{wang2013numerical}.

%022019
Periods in biological sequences can be categorized into two types: integer periods and fractional periods. For integer periods, the 3-base periodicity of protein-coding regions is often used in gene finding \citep{tiwari1997prediction,yin2005fourier,yin2007prediction,yin2014representation}; 2-base periodicity was found in introns of genomes \citep{arques1987periodicities,zhao2018detecting}. 2-periodicity exists in protein sequence regions for $\beta$ -sheet structures. Fractional periods, corresponding to fractional cycles in numerical series, are prevalent and important in structures and functions of protein sequences and genomes. For example, the 3.6-periodicity in protein sequences determines the $\alpha$-helix secondary structure \citep{EISENBERG1984,gruber2005repper,leonov2005periodicity,yin2017coevolution}. Strong 10.4- or 10.5-base periodicity in genomes are associated with nucleosomes \citep{trifonov19983,salih2015strong}. The 6.5-base periodicity is present in \textit{C. elegans} introns \citep{messaoudi2013detection}. Fractional period spectrum may offer high resolution and precise features of sequences, but accurately identifying fractional periods in a large genome is challenging. The demand of the periodicity research of DNA and protein sequences motivates us to investigate advanced techniques for fractional periodicity analysis of time series.

%021119
Due to the critical applications of Fourier transform for solving the problems of science and technology fields, new advancements of Fourier transform are constantly emerging. One of these achievements, fractional Fourier transform (FRFT), is a generalization of the Fourier transform, rediscovered many times over the past hundred years \citep{ozaktas1996digital,sejdic2011fractional}. FRFT can be realized as the development of continues Fourier transform. Discrete fractional Fourier transform (DFRFT) should be considered the extension of discrete Fourier transform (DFT) \citep{candan2000discrete}. FRFT and DFRFT have been successfully used to analyze the time$-$frequency information in quantum mechanics and quantum optics, the study of time$ -$frequency distributions, and many other applications \citep{sejdic2011fractional}. The emergence of various forms of Fourier transformation inspires people to find new formulas of Fourier transform.
It is known that Fourier power spectra of traditional Fourier transform, continuous or discrete fractional Fourier transform are often used as a primary criterion for their applications. For example, based on the power spectra of DFRFT of the DNA sequences, the phylogenetic trees of numerous species can be constructed for evolutionary history \citep{qian2018phylogenetic}. Based on our experience in calculating discrete Fourier power spectrum, a new realization method of DFT is proposed in this paper.

%021219
We previously proposed a method, periodic power spectrum (PPS), which directly computes Fourier power spectrum based on periodic distributions of signal strength on periodic positions \citep{profWang,yin2016periodic}. The advantage of the PPS method is that it avoids spectral leakage and reduces background noise, which both appear in the power spectrum of Fourier transform. Therefore, the PPS method can capture all latent integer periodicities in DNA sequences. We have utilized this method in the detection of latent periodicities in different genome elements, including exons and microsatellite DNA sequences. 

Based on the main idea of PPS method, to employ periodic distributions of the elements in a real sequence, efficiently and directly calculates Fourier power spectra at a specific integer period of the sequence. By using the computing results of Fourier power spectrum for the integer period an algorithm of easily calculating Fourier power spectra of the fractional periods associated the integer period for a real sequence is suggested and the theoretically mathematical proof of the algorithm is presented in this work. Numerical experience shows that it is an extension of DFT realization, similarly DFRFT. 

\section{Theorems and concepts}
\subsection{Fourier transform}
%010319
For a numerical sequence $x$ of length $m$, $x_0, x_1, \ldots, x_{m - 1}$, its discrete Fourier transform at frequency $k$ is defined as \citep{welch1967use}
\begin{equation}
% MathType!Translator!2!1!AMS LaTeX.tdl!TeX -- AMS-LaTeX!
% MathType!MTEF!2!1!+-
% feaafiart1ev1aaatCvAUfeBSjuyZL2yd9gzLbvyNv2CaerbuLwBLn
% hiov2DGi1BTfMBaeXatLxBI9gBaerbd9wDYLwzYbItLDharqqtubsr
% 4rNCHbGeaGqiVu0Je9sqqrpepC0xbbL8F4rqqrFfpeea0xe9Lq-Jc9
% vqaqpepm0xbba9pwe9Q8fs0-yqaqpepae9pg0FirpepeKkFr0xfr-x
% fr-xb9adbaqaaeGaciGaaiaabeqaamaabaabaaGcbaGaamiwaiaacI
% cacaWGRbGaaiykaiabg2da9maaqahabaGaamiEamaaBaaaleaacaWG
% QbaabeaakiaadwgadaahaaWcbeqaaiabgkHiTiaadMgacaaIYaGaeq
% iWdaNaam4AaiaadQgacaGGVaGaamyBaaaaaeaacaWGQbGaeyypa0Ja
% aGimaaqaaiaad2gacqGHsislcaaIXaaaniabggHiLdGccaGGSaGaam
% yAaiabg2da9maakaaabaGaeyOeI0IaaGymaiaacYcaaSqabaGccaaM
% i8UaaGjcVlaayIW7caWGRbGaeyypa0JaaGimaiaacYcacaaIXaGaai
% ilaiablAciljaacYcacaWGTbGaeyOeI0IaaGymaiaacYcaaaa!606F!
%\[
X(k) = \sum\limits_{j = 0}^{m - 1} {x_j e^{ - i2\pi kj/m} } ,i = \sqrt { - 1,} {\kern 1pt} {\kern 1pt} {\kern 1pt} k = 0,1, \ldots ,m - 1,
%\]
% MathType!End!2!1!
\end{equation}
and its Fourier power spectrum ($FPS$) at frequency $k$ is
 \begin{equation}
% MathType!Translator!2!1!AMS LaTeX.tdl!TeX -- AMS-LaTeX!
% MathType!MTEF!2!1!+-
% feaafiart1ev1aaatCvAUfeBSjuyZL2yd9gzLbvyNv2CaerbuLwBLn
% hiov2DGi1BTfMBaeXatLxBI9gBaerbd9wDYLwzYbItLDharqqtubsr
% 4rNCHbGeaGqiVu0Je9sqqrpepC0xbbL8F4rqqrFfpeea0xe9Lq-Jc9
% vqaqpepm0xbba9pwe9Q8fs0-yqaqpepae9pg0FirpepeKkFr0xfr-x
% fr-xb9adbaqaaeGaciGaaiaabeqaamaabaabaaGceaqabeaacaWGgb
% GaamiuaiaadofacaGGOaGaam4AaiaacMcacqGH9aqpcaWGybGaaiik
% aiaadUgacaGGPaWaaWraaSqabeaacaGGQaaaaOGaamiwaiaacIcaca
% WGRbGaaiykaaqaaiabg2da9iaacIcadaaeWbqaaiaadIhadaWgaaWc
% baGaamOAaaqabaGccaWGLbWaaWbaaSqabeaacqGHsislcaWGPbGaaG
% Omaiabec8aWjaadQgacaWGRbGaai4laiaad2gaaaaabaGaamOAaiab
% g2da9iaaicdaaeaacaWGTbGaeyOeI0IaaGymaaqdcqGHris5aOGaai
% ykamaaCaaaleqabaGaaiOkaaaakiaacIcadaaeWbqaaiaadIhadaWg
% aaWcbaGaamOBaaqabaGccaWGLbWaaWbaaSqabeaacqGHsislcaWGPb
% GaaGOmaiabec8aWjaad6gacaWGRbGaai4laiaad2gaaaaabaGaamOB
% aiabg2da9iaaicdaaeaacaWGTbGaeyOeI0IaaGymaaqdcqGHris5aO
% GaaiykaiaacYcacaWGRbGaeyypa0JaaGimaiaacYcacaaIXaGaaiil
% aiablAciljaacYcacaWGTbGaeyOeI0IaaGymaiaacYcaaaaa!7763!
%\[
\begin{gathered}
FPS(k) = X(k){}^*X(k) \hfill \\
= (\sum\limits_{j = 0}^{m - 1} {x_j e^{ - i2\pi jk/m} } )^* (\sum\limits_{n = 0}^{m - 1} {x_n e^{ - i2\pi nk/m} } ),k = 0,1, \ldots ,m - 1, \hfill \\ 
\end{gathered} 
%\]
% MathType!End!2!1!
\end{equation}
where $*$ indicates the complex conjugate.

\subsection{Concepts}
%010419
The Fourier power spectra of a signal can be represented by the congruence derivative sequence of the original sequence.

\begin{definition} For a real number sequence $x$ of length $m$, if two positive integers $n$ and $l$ satisfy $m = nl$, the congruence derivative sequence of $x$, $y$, length of $l$, and its element is defined by  \\
\begin{equation}
 	% MathType!Translator!2!1!AMS LaTeX.tdl!TeX -- AMS-LaTeX!
 	% MathType!MTEF!2!1!+-
 	% feaafiart1ev1aaatCvAUfeBSjuyZL2yd9gzLbvyNv2CaerbuLwBLn
 	% hiov2DGi1BTfMBaeXatLxBI9gBaerbd9wDYLwzYbItLDharqqtubsr
 	% 4rNCHbGeaGqiVu0Je9sqqrpepC0xbbL8F4rqqrFfpeea0xe9Lq-Jc9
 	% vqaqpepm0xbba9pwe9Q8fs0-yqaqpepae9pg0FirpepeKkFr0xfr-x
 	% fr-xb9adbaqaaeGaciGaaiaabeqaamaabaabaaGcbaGaamyEamaaBa
 	% aaleaacaWG0baabeaakiabg2da9maaqahabaGaamiEamaaBaaaleaa
 	% caWGQbGaamiBaiabgUcaRiaadshaaeqaaaqaaiaadQgacqGH9aqpca
 	% aIWaaabaGaamOBaiabgkHiTiaaigdaa0GaeyyeIuoakiaacYcacaWG
 	% 0bGaeyypa0JaaGimaiaacYcacaaIXaGaaiilaiablAciljaadYgacq
 	% GHsislcaaIXaGaaiilaaaa!4F7E!
 	%\[
 	y_t  = \sum\limits_{j = 0}^{n - 1} {x_{jl + t} } ,t = 0,1, \ldots l - 1,
 	%\]
 	% MathType!End!2!1!
\end{equation}
and is named congruence derivative sequence or modulo distribution sequence. If the sequence length does not satisfy $m = nl$, then by padding zeros makes $m = nl$.
\end{definition}
Its DFT is 
 \begin{equation}
% MathType!Translator!2!1!AMS LaTeX.tdl!TeX -- AMS-LaTeX!
% MathType!MTEF!2!1!+-
% feaafiart1ev1aaatCvAUfeBSjuyZL2yd9gzLbvyNv2CaerbuLwBLn
% hiov2DGi1BTfMBaeXatLxBI9gBaerbd9wDYLwzYbItLDharqqtubsr
% 4rNCHbGeaGqiVu0Je9sqqrpepC0xbbL8F4rqqrFfpeea0xe9Lq-Jc9
% vqaqpepm0xbba9pwe9Q8fs0-yqaqpepae9pg0FirpepeKkFr0xfr-x
% fr-xb9adbaqaaeGaciGaaiaabeqaamaabaabaaGcbaGaamywaiaacI
% cacaWGYbGaaiykaiabg2da9maaqahabaGaamyEamaaBaaaleaacaWG
% 0baabeaaaeaacaWG0bGaeyypa0JaaGimaaqaaiaadYgacqGHsislca
% aIXaaaniabggHiLdGccaWGLbWaaWbaaSqabeaacqGHsislcaWGPbGa
% aGOmaiabec8aWjaadshacaWGYbGaai4laiaadYgaaaGccaGGSaGaam
% OCaiabg2da9iaaicdacaGGSaGaaGymaiaacYcacqWIMaYscaGGSaGa
% amiBaiabgkHiTiaaigdacaGGUaaaaa!577F!
%\[
Y(r) = \sum\limits_{t = 0}^{l - 1} {y_t } e^{ - i2\pi tr/l} ,r = 0,1, \ldots ,l - 1.
%\]
% MathType!End!2!1!
\end{equation}
 
%010519
We introduce the relationship of DFTs between the original sequence and its congruence derivative sequence \citep{profWang}. 
\begin{theorem}
For a real sequence $x$ of length $m$, suppose $m = nl$, then the DFT of the sequence $x$ at frequency $kn$ is equal to the DFT of its congruence derivative sequence at frequency $k$.
 \begin{equation}
% MathType!Translator!2!1!AMS LaTeX.tdl!TeX -- AMS-LaTeX!
% MathType!MTEF!2!1!+-
% feaafiart1ev1aqatCvAUfeBSjuyZL2yd9gzLbvyNv2CaerbuLwBLn
% hiov2DGi1BTfMBaeXatLxBI9gBaerbd9wDYLwzYbItLDharqqtubsr
% 4rNCHbGeaGqiVCI8FfYJH8YrFfeuY-Hhbbf9v8qqaqFr0xc9pk0xbb
% a9q8WqFfeaY-biLkVcLq-JHqpepeea0-as0Fb9pgeaYRXxe9vr0-vr
% 0-vqpWqaaeaabiGaciaacaqabeaadaqaaqaaaOqaaiaadIfacaGGOa
% Gaam4Aaiaad6gacaGGPaGaeyypa0JaamywaiaacIcacaWGRbGaaiyk
% aiaacYcacaaIWaGaeyizImQaam4AaiabgsMiJkaadYgacqGHsislca
% aIXaGaaiOlaaaa!4738!
X(kn) = Y(k),0 \leqslant k \leqslant l - 1.
% MathType!End!2!1!
\end{equation}
\end{theorem}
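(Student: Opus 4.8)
The plan is to prove the identity $X(kn) = Y(k)$ by direct substitution into the definition of the discrete Fourier transform and then collapsing the double sum that results via the congruence structure of the index set. First I would write out $X(kn) = \sum_{j=0}^{m-1} x_j e^{-i2\pi (kn) j / m}$ and, using $m = nl$, simplify the exponent to $e^{-i2\pi k j / l}$. The essential observation is that the single index $j$ ranging over $\{0,1,\ldots,m-1\}$ can be reparametrized uniquely as $j = al + t$ with $a \in \{0,1,\ldots,n-1\}$ and $t \in \{0,1,\ldots,l-1\}$ (Euclidean division of $j$ by $l$). This is the step I expect to be the only real point requiring care: one must check that this map is a bijection between $\{0,\ldots,m-1\}$ and the product set, which is immediate since $m = nl$, but it is the hinge of the argument.

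Next I would substitute $j = al + t$ into the sum, turning it into a double sum $\sum_{t=0}^{l-1}\sum_{a=0}^{n-1} x_{al+t}\, e^{-i2\pi k (al+t)/l}$. The key cancellation is that $e^{-i2\pi k (al + t)/l} = e^{-i2\pi k a} \cdot e^{-i2\pi k t / l} = e^{-i2\pi k t/l}$, because $e^{-i2\pi k a} = 1$ for all integers $k$ and $a$. This removes the dependence on $a$ from the exponential entirely, so the inner sum over $a$ acts only on $x_{al+t}$, yielding $\sum_{a=0}^{n-1} x_{al+t} = y_t$ by the definition of the congruence derivative sequence (Definition, equation for $y_t$). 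Pulling everything together gives $X(kn) = \sum_{t=0}^{l-1} y_t\, e^{-i2\pi k t/l}$, which is exactly $Y(k)$ by the defining equation for the DFT of $y$.

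Finally I would note that the range restriction $0 \le k \le l-1$ is exactly what makes $kn$ a legitimate frequency index for $x$ (since then $0 \le kn \le (l-1)n \le m - 1$) and $k$ a legitimate frequency index for $y$, so the identity is stated for precisely the overlapping range where both sides are defined; for $k$ outside this range both sides still agree by the $l$-periodicity of $Y$ and the $m$-periodicity of $X$, but that is not needed here. The proof is thus entirely a matter of index bookkeeping and the elementary fact that integer multiples of $2\pi$ in the exponent vanish; there is no analytic obstacle, and the only thing to be vigilant about is the correctness of the $j \mapsto (a,t)$ reindexing and keeping the summation limits consistent throughout.
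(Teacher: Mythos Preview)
Your argument is correct and is essentially the same computation the paper gives: the paper starts from $Y(k)$, substitutes $y_t = \sum_j x_{jl+t}$, inserts the harmless factor $e^{-i2\pi jk}=1$ into the exponential, and reindexes the double sum as a single sum over $r=jl+t$ to recognize $X(kn)$, whereas you run the identical reindexing in the opposite direction starting from $X(kn)$. The key step (the bijection $j\leftrightarrow(a,t)$ via Euclidean division and the vanishing of $e^{-i2\pi ka}$) is the same in both.
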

  
\begin{proof}
  We know that \\
% MathType!Translator!2!1!AMS LaTeX.tdl!TeX -- AMS-LaTeX!
% MathType!MTEF!2!1!+-
% feaafiart1ev1aaatCvAUfeBSjuyZL2yd9gzLbvyNv2CaerbuLwBLn
% hiov2DGi1BTfMBaeXatLxBI9gBaerbd9wDYLwzYbItLDharqqtubsr
% 4rNCHbGeaGqiVu0Je9sqqrpepC0xbbL8F4rqqrFfpeea0xe9Lq-Jc9
% vqaqpepm0xbba9pwe9Q8fs0-yqaqpepae9pg0FirpepeKkFr0xfr-x
% fr-xb9adbaqaaeGaciGaaiaabeqaamaabaabaaGceaqabeaacaWGzb
% GaaiikaiaadUgacaGGPaGaeyypa0ZaaabCaeaacaWG5bWaaSbaaSqa
% aiaadshaaeqaaaqaaiaadshacqGH9aqpcaaIWaaabaGaamiBaiabgk
% HiTiaaigdaa0GaeyyeIuoakiaadwgadaahaaWcbeqaaiabgkHiTiaa
% dMgacaaIYaGaeqiWdaNaamiDaiaadUgacaGGVaGaamiBaaaakiabg2
% da9maaqahabaWaaabCaeaacaWG4bWaaSbaaSqaaiaadQgacaWGSbGa
% ey4kaSIaamiDaaqabaaabaGaamOAaiabg2da9iaaicdaaeaacaWGUb
% GaeyOeI0IaaGymaaqdcqGHris5aOGaamyzamaaCaaaleqabaGaeyOe
% I0IaamyAaiaaikdacqaHapaCcaWG0bGaam4Aaiaac+cacaWGSbaaaa
% qaaiaadshacqGH9aqpcaaIWaaabaGaamiBaiabgkHiTiaaigdaa0Ga
% eyyeIuoaaOqaaiabg2da9maaqahabaWaaabCaeaacaWG4bWaaSbaaS
% qaaiaadQgacaWGSbGaey4kaSIaamiDaaqabaaabaGaamOAaiabg2da
% 9iaaicdaaeaacaWGUbGaeyOeI0IaaGymaaqdcqGHris5aOGaamyzam
% aaCaaaleqabaGaeyOeI0IaamyAaiaaikdacqaHapaCcaGGOaGaamOA
% aiaadYgacqGHRaWkcaWG0bGaaiykaiaadUgacaGGVaGaamiBaaaaae
% aacaWG0bGaeyypa0JaaGimaaqaaiaadYgacqGHsislcaaIXaaaniab
% ggHiLdGccqGH9aqpdaaeWbqaaiaadIhadaWgaaWcbaGaamOCaaqaba
% aabaGaamOCaiabg2da9iaaicdaaeaacaWGTbGaeyOeI0IaaGymaaqd
% cqGHris5aOGaamyzamaaCaaaleqabaGaeyOeI0IaamyAaiaaikdacq
% aHapaCcaWGYbGaam4Aaiaad6gacaGGVaGaamyBaaaaaOqaaiabg2da
% 9iaadIfacaGGOaGaam4Aaiaad6gacaGGPaGaeyypa0JaamiwaiaacI
% cadaWcaaqaaiaadUgaaeaacaWGSbaaaiaad2gacaGGPaGaaiOlaaaa
% aa!AD91!
\[
\begin{gathered}
Y(k) = \sum\limits_{t = 0}^{l - 1} {y_t } e^{ - i2\pi tk/l}  = \sum\limits_{t = 0}^{l - 1} {\sum\limits_{j = 0}^{n - 1} {x_{jl + t} } e^{ - i2\pi tk/l} }  \hfill \\
= \sum\limits_{t = 0}^{l - 1} {\sum\limits_{j = 0}^{n - 1} {x_{jl + t} } e^{ - i2\pi (jl + t)k/l} }  = \sum\limits_{r = 0}^{m - 1} {x_r } e^{ - i2\pi rkn/m}  \hfill \\
= X(kn) = X(\frac{k}
{l}m). \hfill \\ 
\end{gathered} 
\]
% MathType!End!2!1!
\end{proof}
It is clear that $k = 0$ in formula (2.5) is the trivial case.

We know that $X(\frac{k}
{l}m)$ is the DFT of sequence $x$ at frequency $k/l$, in other word, $X(\frac{k}{l}m)$ is the DFT of sequence $x$ at period $l/k$, and the corresponding Fourier power spectrum of $x$ at periods $l/k$ is written as $FPS(l/k)$, where $k = 1, 2, \ldots, l - 1$. When $k = 1$, we call $FPS (l)$ as Fourier power spectrum at integer period $l$, and $FPS (l/k)$, $k = 2, 3,\ldots,l-1$ are named the associated fractional Fourier power spectra for the integer period $l$ because all the numerators of fractional periods are $l$.

%02112019
We here introduce the self $q-shift$ summation of a sequence $y$.
\begin{definition}
	For a real sequence $y$ of length $l$, and $t = 0, 1, \ldots ,l - 1$, its self $q$-shift summation is defined by 
  \begin{equation} 
	% MathType!MTEF!2!1!+-
	% feaafiart1ev1aaatCvAUfeBSjuyZL2yd9gzLbvyNv2CaerbuLwBLn
	% hiov2DGi1BTfMBaeXatLxBI9gBaerbd9wDYLwzYbItLDharqqtubsr
	% 4rNCHbGeaGqiVCI8FfYJH8YrFfeuY-Hhbbf9v8qqaqFr0xc9pk0xbb
	% a9q8WqFfeaY-biLkVcLq-JHqpepeea0-as0Fb9pgeaYRXxe9vr0-vr
	% 0-vqpWqaaeaabiGaciaacaqabeaadaqaaqaaaOqaaiaadQhadaWgaa
	% WcbaGaamiBaiaacYcacaWGXbaabeaakiabg2da9maaqahabaGaamyE
	% amaaBaaaleaacaWG0baabeaakiaadMhadaWgaaWcbaGaamiDaiabgU
	% caRiaadghaaeqaaaqaaiaadshacqGH9aqpcaaIWaaabaGaamiBaiab
	% gkHiTiaaigdaa0GaeyyeIuoaaaa!4865!
	%\[
	z_{l,q}  = \sum\limits_{t = 0}^{l - 1} {y_t y_{t + q} } 
	%\]
   \end{equation}
   with $t+q$ taken modulo $l$, $0\leqslant q < l$.
\end{definition}

A special case is for $q = l/2$ when $l$ is an even number, 
\begin{equation} 
% MathType!Translator!2!1!AMS LaTeX.tdl!TeX -- AMS-LaTeX!
% MathType!MTEF!2!1!+-
% feaafiart1ev1aaatCvAUfeBSjuyZL2yd9gzLbvyNv2CaerbuLwBLn
% hiov2DGi1BTfMBaeXatLxBI9gBaerbd9wDYLwzYbItLDharqqtubsr
% 4rNCHbGeaGqiVu0Je9sqqrpepC0xbbL8F4rqqrFfpeea0xe9Lq-Jc9
% vqaqpepm0xbba9pwe9Q8fs0-yqaqpepae9pg0FirpepeKkFr0xfr-x
% fr-xb9adbaqaaeGaciGaaiaabeqaamaabaabaaGcbaGaamOEamaaBa
% aaleaacaWGSbGaai4laiaaikdacaGGSaGaamiBaiaac+cacaaIYaaa
% beaakiabg2da9maaqahabaGaamyEamaaBaaaleaacaWG0baabeaaki
% aadMhadaWgaaWcbaGaamiDaiabgUcaRiaadYgacaGGVaGaaGOmaaqa
% baaabaGaamiDaiabg2da9iaaicdaaeaacaWGSbGaai4laiaaikdacq
% GHsislcaaIXaaaniabggHiLdaaaa!4E22!
%\[
z_{l/2,l/2}  = \sum\limits_{t = 0}^{l/2 - 1} {y_t y_{t + l/2} } 
%\]
% MathType!End!2!1!
 .
\end{equation} 

If we write sequence $y_t$, $t = 0,1, \ldots ,l - 1$, as a vector, $y^T = [y_0 ,y_1 , \cdots ,y_{l - 1} ]$, then $z_{l, q} $ can be realized by the autocorrelation of vector $y$. For a special case, $z_{l,0}$ is equal to the inner product of vector $y$, $z_{l,0}  = y^Ty$.

\subsection{Computations of Fourier power spectra of integer period and its associated fractional periods for a real sequence}

In formula (2.5) if $k = 1$, we may get the DFT of sequence $x$ at integer period $l$. Our previous research suggests a fast computing algorithm to compute Fourier power spectra at integer periods of a real sequence and its application to capture all latent integer periodicities in DNA sequences \citep{yin2016periodic}, and presented a fundamental algorithm of calculating $FPS$ of fractional periods for a real sequence \citep{wang2016fast}. 

Here, we further investigate the properties of integer and fractional periods of the real sequence. Noticed the formula (2.4) $Y(k) = \sum\limits_{t = 0}^{l - 1} {y_t } e^{ - i2\pi kt/l}$ is the DFT of real sequence, $y_{0}, y_1, \ldots , y_{l-1} $. When $k = 1$, $Y(l)$ is the DFT of the integer period $l$ for the sequence; when $k \ne 1$, and $Y(k)$ is the DFT of the fractional periods $l/2, l/3, \ldots, l/(l-1)$, respectively. These values $Y(k), k = 2, 3, \ldots, l-1$ are DFTs of the fractional periods that are associated with the integer period $l$.

\begin{theorem}
The Fourier power spectra of the integer period $l$ and its associated fractional periods for a real sequence are symmetric.
\end{theorem}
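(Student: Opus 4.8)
The plan is to use the only special structure available, namely that $y$ is a \emph{real} sequence, which forces its DFT to have Hermitian symmetry; squaring then pushes this symmetry onto the power spectra. First I would establish, for every real sequence $y$ of length $l$ and every index $k$ with $1\le k\le l-1$, the identity
\[
Y(l-k)=\sum_{t=0}^{l-1}y_t e^{-i2\pi(l-k)t/l}=\sum_{t=0}^{l-1}y_t e^{\,i2\pi kt/l}=\overline{\sum_{t=0}^{l-1}y_t e^{-i2\pi kt/l}}=\overline{Y(k)},
\]
where the middle step uses $e^{-i2\pi lt/l}=1$ and the last step uses that each $y_t$ is real. Taking squared moduli gives at once
\[
FPS\!\left(\frac{l}{\,l-k\,}\right)=Y(l-k)^{*}Y(l-k)=\overline{Y(k)}\,Y(k)=Y(k)^{*}Y(k)=FPS\!\left(\frac{l}{k}\right).
\]
Since $k=1$ is the integer period $l$ and $k=2,3,\ldots,l-1$ are the associated fractional periods $l/2,l/3,\ldots,l/(l-1)$, this is exactly the asserted symmetry: the list of values $FPS(l),FPS(l/2),\ldots,FPS(l/(l-1))$ is invariant under the reflection $k\mapsto l-k$ of the frequency index, i.e.\ it is symmetric about $k=l/2$ (equivalently about the period value $2$), with the integer period $l$ paired to the fractional period $l/(l-1)$.

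I would then give a second derivation in the language the paper has just set up, through the self $q$-shift summation $z_{l,q}$ of Definition~2.2, since it re-proves the result and at the same time displays the spectra concretely. Expanding the product defining $FPS(l/k)$ and grouping the double sum by the lag $q\equiv t-s\pmod l$ gives
\[
FPS\!\left(\frac{l}{k}\right)=\left(\sum_{t=0}^{l-1}y_t e^{-i2\pi kt/l}\right)^{\!*}\left(\sum_{s=0}^{l-1}y_s e^{-i2\pi ks/l}\right)=\sum_{q=0}^{l-1}z_{l,q}\,e^{-i2\pi kq/l}.
\]
From Definition~2.2 one checks immediately (shift $t\mapsto t-q$ modulo $l$) that $z_{l,q}=z_{l,l-q}$ and that every $z_{l,q}$ is real, so the right-hand side collapses to the finite cosine series $\sum_{q=0}^{l-1}z_{l,q}\cos(2\pi kq/l)$, which is real and visibly unchanged when $k$ is replaced by $l-k$. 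This recovers $FPS(l/k)=FPS(l/(l-k))$ and in addition shows that each power-spectrum value is a real cosine polynomial in the shift variable.

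Because the argument is purely algebraic there is no analytic difficulty; the only thing that needs care is the index bookkeeping. I would (i) fix the modular conventions of Definition~2.2 so that the collapse $q\equiv t-s\pmod l$ and the evenness $z_{l,q}=z_{l,l-q}$ are genuine identities rather than formal manipulations, (ii) treat the endpoint $k=l/2$ (present when $l$ is even) separately, as the fixed point of the reflection, where the symmetry statement is automatic, and (iii) state explicitly at the outset what \emph{symmetric} is to mean here --- reflection of the frequency index $k$ about $l/2$, equivalently the pairing $l/k\leftrightarrow l/(l-k)$ of periods --- so that the claim has unambiguous content before the computation begins.
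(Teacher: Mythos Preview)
Your first derivation is exactly the paper's proof: invoke the Hermitian symmetry $Y(k)=Y(l-k)^{*}$ of the DFT of a real sequence and take squared moduli to obtain $FPS(k)=FPS(l-k)$. The paper states this in two lines; you have merely written out the exponential manipulation $e^{-i2\pi(l-k)t/l}=e^{i2\pi kt/l}$ explicitly, which is fine.

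Your second derivation, via the circular-autocorrelation expansion $FPS(l/k)=\sum_{q}z_{l,q}e^{-i2\pi kq/l}$ and the evenness $z_{l,q}=z_{l,l-q}$, is correct but is \emph{not} how the paper proves Theorem~2.2; rather, it anticipates the content of Theorem~2.4, where the paper later derives the cosine-series representation of $FPS(l/k)$ through the quadratic form $y^{T}e^{*}e^{T}y$ and the Hermitian Toeplitz structure of $B=e^{*}e^{T}$ (Theorem~2.3). What your alternative buys is economy --- you reach the cosine form directly from a Wiener--Khinchin-style collapse by lag, bypassing the matrix detour --- at the cost of doing here what the paper postpones. If you keep this second argument, your caretaking remarks~(i)--(iii) are appropriate; in particular the sign of the lag (you write $q\equiv t-s$ but end with $e^{-i2\pi kq/l}$) is harmless only because $z_{l,q}=z_{l,l-q}$ is real, so it would be cleanest to state that identity \emph{before} collapsing to cosines.
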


\begin{proof} %\renewcommand{\qedsymbol}{} %this is to remove the qed symbol
Due to the conjugate symmetry of Fourier transform of the real sequence,
% MathType!Translator!2!1!AMS LaTeX.tdl!TeX -- AMS-LaTeX!
% MathType!MTEF!2!1!+-
% feaafiart1ev1aaatCvAUfeBSjuyZL2yd9gzLbvyNv2CaerbuLwBLn
% hiov2DGi1BTfMBaeXatLxBI9gBaerbd9wDYLwzYbItLDharqqtubsr
% 4rNCHbGeaGqiVu0Je9sqqrpepC0xbbL8F4rqqrFfpeea0xe9Lq-Jc9
% vqaqpepm0xbba9pwe9Q8fs0-yqaqpepae9pg0FirpepeKkFr0xfr-x
% fr-xb9adbaqaaeGaciGaaiaabeqaamaabaabaaGcbaGaamywaiaacI
% cacaWGRbGaaiykaiabg2da9iaadMfacaGGOaGaamiBaiabgkHiTiaa
% dUgacaGGPaWaaWbaaSqabeaacaGGQaaaaOGaaiilaiaadUgacqGH9a
% qpcaaIXaGaaiilaiaaikdacaGGSaGaaG4maiaacYcacqWIMaYscaGG
% SaGaamiBaiabgkHiTiaaigdacaGGUaaaaa!4C0B!
\[
Y(k) = Y(l - k)^* ,k = 1,2,3, \ldots ,l - 1.
\]
% MathType!End!2!1!
%\end{proof}
The power spectrum, $FPS(k) = Y(k)*Y(k)$, satisfies the following symmetric property
\begin{equation} 
% MathType!Translator!2!1!AMS LaTeX.tdl!TeX -- AMS-LaTeX!
% MathType!MTEF!2!1!+-
% feaafiart1ev1aaatCvAUfeBSjuyZL2yd9gzLbvyNv2CaerbuLwBLn
% hiov2DGi1BTfMBaeXatLxBI9gBaerbd9wDYLwzYbItLDharqqtubsr
% 4rNCHbGeaGqiVu0Je9sqqrpepC0xbbL8F4rqqrFfpeea0xe9Lq-Jc9
% vqaqpepm0xbba9pwe9Q8fs0-yqaqpepae9pg0FirpepeKkFr0xfr-x
% fr-xb9adbaqaaeGaciGaaiaabeqaamaabaabaaGcbaGaamOraiaadc
% facaWGtbGaaiikaiaadUgacaGGPaGaeyypa0JaamOraiaadcfacaWG
% tbGaaiikaiaadYgacaGGTaGaam4AaiaacMcacaGGSaGaam4Aaiabg2
% da9iaaigdacaGGSaGaaGOmaiaacYcacaaIZaGaaiilaiabgAci8kaa
% cYcacaWGSbGaaiylaiaaigdacaGGUaaaaa!4E4E!
%\[
FPS(k) = FPS(l - k),k = 1,2,3, \ldots ,l - 1.
%\]
% MathType!End!2!1!
\end{equation} 
\end{proof}
%021619
This theorem indicates that if we need computing the $FPS$ of the integer period $l$ and all its related fractional periods, calculating $ \left\lfloor {\frac{l}{2}}\right\rfloor$ $FPS$ is sufficient.

Let $y$ denote
% MathType!Translator!2!1!AMS LaTeX.tdl!TeX -- AMS-LaTeX!
% MathType!MTEF!2!1!+-
% feaafiart1ev1aaatCvAUfeBSjuyZL2yd9gzLbvyNv2CaerbuLwBLn
% hiov2DGi1BTfMBaeXatLxBI9gBaerbd9wDYLwzYbItLDharqqtubsr
% 4rNCHbGeaGqiVu0Je9sqqrpepC0xbbL8F4rqqrFfpeea0xe9Lq-Jc9
% vqaqpepm0xbba9pwe9Q8fs0-yqaqpepae9pg0FirpepeKkFr0xfr-x
% fr-xb9adbaqaaeGaciGaaiaabeqaamaabaabaaGcbaGaamyEaiabg2
% da9maadmaabaGaamyEamaaBaaaleaacaaIWaGaaiilaaqabaGccaWG
% 5bWaaSbaaSqaaiaaigdacaGGSaaabeaakiablAciljaacYcacaWG5b
% WaaSbaaSqaaiaadYgacqGHsislcaaIXaaabeaaaOGaay5waiaaw2fa
% amaaCaaaleqabaGaamivaaaaaaa!45C5!
\[
y = \left[ {y_{0,} y_{1,}  \ldots ,y_{l - 1} } \right]^T 
\]
% MathType!End!2!1!
and $e$ denote  
% MathType!Translator!2!1!AMS LaTeX.tdl!TeX -- AMS-LaTeX!
% MathType!MTEF!2!1!+-
% feaafiart1ev1aaatCvAUfeBSjuyZL2yd9gzLbvyNv2CaerbuLwBLn
% hiov2DGi1BTfMBaeXatLxBI9gBaerbd9wDYLwzYbItLDharqqtubsr
% 4rNCHbGeaGqiVu0Je9sqqrpepC0xbbL8F4rqqrFfpeea0xe9Lq-Jc9
% vqaqpepm0xbba9pwe9Q8fs0-yqaqpepae9pg0FirpepeKkFr0xfr-x
% fr-xb9adbaqaaeGaciGaaiaabeqaamaabaabaaGcbaGaamyzaiabg2
% da9maadmaabaGaamyzamaaCaaaleqabaGaeyOeI0YaaSaaaeaacaWG
% PbGaaGOmaiabec8aWjaadUgacaaIWaaabaGaamiBaaaaaaGccaGGSa
% GaamyzamaaCaaaleqabaGaeyOeI0YaaSaaaeaacaWGPbGaaGOmaiab
% ec8aWjaadUgacaaIXaaabaGaamiBaaaaaaGccaGGSaGaeSOjGSKaai
% ilaiaadwgadaahaaWcbeqaaiabgkHiTmaalaaabaGaamyAaiaaikda
% cqaHapaCcaWGRbGaaiikaiaadYgacqGHsislcaaIXaGaaiykaaqaai
% aadYgaaaaaaaGccaGLBbGaayzxaaWaaWbaaSqabeaacaWGubaaaOGa
% aiilaaaa!5A5A!
\[
e = \left[ {e^{ - \frac{{i2\pi k0}}
		{l}} ,e^{ - \frac{{i2\pi k1}}
		{l}} , \ldots ,e^{ - \frac{{i2\pi k(l - 1)}}
		{l}} } \right]^T ,
\]
% MathType!End!2!1!
where $T$ is the transpose of a vector. Formula (2.5) can be rewritten as 
% MathType!Translator!2!1!AMS LaTeX.tdl!TeX -- AMS-LaTeX!
% MathType!MTEF!2!1!+-
% feaafiart1ev1aaatCvAUfeBSjuyZL2yd9gzLbvyNv2CaerbuLwBLn
% hiov2DGi1BTfMBaeXatLxBI9gBaerbd9wDYLwzYbItLDharqqtubsr
% 4rNCHbGeaGqiVu0Je9sqqrpepC0xbbL8F4rqqrFfpeea0xe9Lq-Jc9
% vqaqpepm0xbba9pwe9Q8fs0-yqaqpepae9pg0FirpepeKkFr0xfr-x
% fr-xb9adbaqaaeGaciGaaiaabeqaamaabaabaaGcbaGaamiwaiaacI
% cacaWGRbGaamOBaiaacMcacqGH9aqpcaWGLbWaaWbaaSqabeaacaWG
% ubaaaOGaamyEaiaacYcaaaa!3EB4!
\[
X(kn) = e^T y,
\]
% MathType!End!2!1!
therefore, Fourier power spectrum of the real sequence $x$ is as follows
\begin{equation} 
% MathType!Translator!2!1!AMS LaTeX.tdl!TeX -- AMS-LaTeX!
% MathType!MTEF!2!1!+-
% feaafiart1ev1aaatCvAUfeBSjuyZL2yd9gzLbvyNv2CaerbuLwBLn
% hiov2DGi1BTfMBaeXatLxBI9gBaerbd9wDYLwzYbItLDharqqtubsr
% 4rNCHbGeaGqiVu0Je9sqqrpepC0xbbL8F4rqqrFfpeea0xe9Lq-Jc9
% vqaqpepm0xbba9pwe9Q8fs0-yqaqpepae9pg0FirpepeKkFr0xfr-x
% fr-xb9adbaqaaeGaciGaaiaabeqaamaabaabaaGcbaGaamiwaiaacI
% cacaWGRbGaamOBaiaacMcadaahaaWcbeqaaiaacQcaaaGccaWGybGa
% aiikaiaadUgacaWGUbGaaiykaiabg2da9iaadMfacaGGOaGaam4Aai
% aacMcadaahaaWcbeqaaiaacQcaaaGccaWGzbGaaiikaiaadUgacaGG
% PaGaeyypa0JaaiikaiaadwgadaahaaWcbeqaaiaadshaaaGccaWG5b
% GaaiykamaaCaaaleqabaGaaiOkaaaakiaadwgadaahaaWcbeqaaiaa
% dshaaaGccaWG5bGaeyypa0JaamyEamaaCaaaleqabaGaamiDaaaaki
% aadwgadaahaaWcbeqaaiaacQcaaaGccaWGLbWaaWbaaSqabeaacaWG
% 0baaaOGaamyEaiaac6caaaa!597E!
%\[
X(kn)^* X(kn) = Y(k)^* Y(k) = (e^t y)^* e^t y = y^t e^* e^t y.
%\]
% MathType!End!2!1!
\end{equation} 
The coefficient matrix of this quadratic form is written as
% MathType!Translator!2!1!AMS LaTeX.tdl!TeX -- AMS-LaTeX!
% MathType!MTEF!2!1!+-
% feaafiart1ev1aaatCvAUfeBSjuyZL2yd9gzLbvyNv2CaerbuLwBLn
% hiov2DGi1BTfMBaeXatLxBI9gBaerbd9wDYLwzYbItLDharqqtubsr
% 4rNCHbGeaGqiVu0Je9sqqrpepC0xbbL8F4rqqrFfpeea0xe9Lq-Jc9
% vqaqpepm0xbba9pwe9Q8fs0-yqaqpepae9pg0FirpepeKkFr0xfr-x
% fr-xb9adbaqaaeGaciGaaiaabeqaamaabaabaaGcbaGaamOqaiabg2
% da9iaadwgadaahaaWcbeqaaiaacQcaaaGccaWGLbWaaWbaaSqabeaa
% caWG0baaaOGaaiOlaaaa!3C55!
\[
B = e^* e^t .
\]
% MathType!End!2!1!
\begin{theorem}
Matrix $B = e^* e^t $ is a Hermitian Toeplitz matrix.
\end{theorem}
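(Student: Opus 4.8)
The plan is to compute the $(p,q)$ entry of $B$ directly from the outer‑product definition $B = e^* e^t$ together with the explicit form of $e$, and then to read off both the Toeplitz and the Hermitian properties from the resulting closed form. Writing $\zeta = e^{-i2\pi k/l}$, the $j$‑th component of $e$ is $e_j = \zeta^{\,j}$ for $j = 0,1,\ldots,l-1$, so $e$ is a geometric progression whose terms all lie on the unit circle; this one structural fact is what drives the whole argument.

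First I would record that, since $e^*$ denotes the entrywise complex conjugate of the column vector $e$ and $e^t$ its transpose (a row vector), the product $B = e^* e^t$ is the $l\times l$ matrix with entries
\[
B_{pq} = \overline{e_p}\,e_q = \overline{\zeta^{\,p}}\,\zeta^{\,q},\qquad 0 \le p,q \le l-1 .
\]
Because $|\zeta| = 1$ we have $\overline{\zeta} = \zeta^{-1}$, and hence $B_{pq} = \zeta^{\,q-p} = e^{-i2\pi k(q-p)/l}$.

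Next, for the Toeplitz property I would observe that $B_{pq}$ depends on the indices only through the difference $q-p$; equivalently $B_{p+1,\,q+1} = B_{pq}$ whenever both indices remain in range, so $B$ is constant along each of its diagonals, which is exactly the definition of a Toeplitz matrix. For the Hermitian property I would compute $\overline{B_{qp}} = \overline{\zeta^{\,p-q}} = \zeta^{\,q-p} = B_{pq}$, which is precisely the condition that $B$ equal its own conjugate transpose. (Alternatively, one may note that any matrix of the form $\overline{v}\,v^{t}$ is automatically Hermitian, since its conjugate transpose is $\overline{(\overline{v}\,v^{t})^{t}} = \overline{v\,\overline{v}^{\,t}} = \overline{v}\,v^{t}$, while the Toeplitz structure is then forced by the geometric‑progression form of $e$.)

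There is no real obstacle here: the only points requiring care are the bookkeeping of conjugates versus transposes in the outer product, and the observation that the Toeplitz conclusion genuinely relies on $|\zeta| = 1$ — a generic rank‑one outer product need not be Toeplitz. Once the entry $B_{pq} = e^{-i2\pi k(q-p)/l}$ is in hand, both asserted properties are immediate.
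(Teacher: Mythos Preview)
Your proof is correct and follows essentially the same approach as the paper: both compute the $(r,s)$ entry of $B$ explicitly as $e^{i2\pi k(r-s)/l}$, then read off the Hermitian property from $b_{rs}=\overline{b_{sr}}$ and the Toeplitz property from the fact that the entry depends only on $r-s$. Your presentation is a bit more streamlined (the $\zeta$ notation and the remark that $|\zeta|=1$ is what turns the rank-one outer product into a Toeplitz matrix), but there is no substantive difference in method.
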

\begin{proof} %\renewcommand{\qedsymbol}{} %this is to remove the qed symbol
Suppose the elements of $B$ are $b_{rs}$, $r, s = 1, 2, \ldots , l$, $b_{rs}$ can be represented as
\begin{equation} 
% MathType!Translator!2!1!AMS LaTeX.tdl!TeX -- AMS-LaTeX!
% MathType!MTEF!2!1!+-
% feaafiart1ev1aaatCvAUfeBSjuyZL2yd9gzLbvyNv2CaerbuLwBLn
% hiov2DGi1BTfMBaeXatLxBI9gBaerbd9wDYLwzYbItLDharqqtubsr
% 4rNCHbGeaGqiVu0Je9sqqrpepC0xbbL8F4rqqrFfpeea0xe9Lq-Jc9
% vqaqpepm0xbba9pwe9Q8fs0-yqaqpepae9pg0FirpepeKkFr0xfr-x
% fr-xb9adbaqaaeGaciGaaiaabeqaamaabaabaaGcbaGaamOyamaaBa
% aaleaacaWGYbGaam4CaaqabaGccqGH9aqpcaWGLbWaaWbaaSqabeaa
% caWGPbWaaSaaaeaacaaIYaGaeqiWdaNaam4AaiaadkhaaeaacaWGSb
% aaaaaakiaadwgadaahaaWcbeqaaiabgkHiTiaadMgadaWcaaqaaiaa
% ikdacqaHapaCcaWGRbGaam4CaaqaaiaadYgaaaaaaOGaeyypa0Jaam
% yzamaaCaaaleqabaGaamyAamaalaaabaGaaGOmaiabec8aWjaadUga
% caGGOaGaamOCaiabgkHiTiaadohacaGGPaaabaGaamiBaaaaaaGcca
% GGUaaaaa!5633!
%\[
b_{rs}  = e^{i\frac{{2\pi kr}}
	{l}} e^{ - i\frac{{2\pi ks}}
	{l}}  = e^{i\frac{{2\pi k(r - s)}}
	{l}} .
%\]
% MathType!End!2!1!
\end{equation} 
If $r = s$, then  $b_{rs} = 1$, i.e. the diagonal elements of $B$ are ones, else if $r > s$, then $r-s > 0$, $ b_{rs}$ is at the lower triangular matrix of $B$, otherwise $b_{rs}$ is at the upper triangular matrix of $B$. Notice the representation of $b_{rs}$ in formula (2.10), $b_{rs}  = b_{sr} ^*$, therefore, matrix $B$ is a Hermitian matrix. 

A Toeplitz matrix is a matrix that has identical elements along any line parallel to diagonal. In matrix $B$, all elements along the lines parallel to diagonal their differences of the subscripts satisfy $r - s = 1$ or $ s - r = 1$, $r - s = 2$ or $s - r = 2$, $\ldots$, $r - s = l -1$ or $s - r = l-1 $, respectively, therefore, matrix $B$ is a Hermitian Toeplitz matrix.
\end{proof}

The formula (2.9) is the mathematical representation for $FPS(l/k)$ of a real sequence. From computational mathematics of view, for efficiently calculating $FPS(l/k)$, we need alternative procedure instead of directly computing (2.9) for saving computational cost.

%0220
\begin{theorem} For a real sequence $x$ of length $m$, if $m = nl$ and its congruence derivative sequence is $y_t, t = 0, 1, \cdots, l-1$, then $FPS(l/k), k = 1, 2, 3, \ldots, l - 1$ of the sequence can be expressed as follows.

If $l$ is an odd number,
\begin{equation} 
% MathType!Translator!2!1!AMS LaTeX.tdl!TeX -- AMS-LaTeX!
% MathType!MTEF!2!1!+-
% feaafiart1ev1aaatCvAUfeBSjuyZL2yd9gzLbvyNv2CaerbuLwBLn
% hiov2DGi1BTfMBaeXatLxBI9gBaerbd9wDYLwzYbItLDharqqtubsr
% 4rNCHbGeaGqiVu0Je9sqqrpepC0xbbL8F4rqqrFfpeea0xe9Lq-Jc9
% vqaqpepm0xbba9pwe9Q8fs0-yqaqpepae9pg0FirpepeKkFr0xfr-x
% fr-xb9adbaqaaeGaciGaaiaabeqaamaabaabaaGcbaGaamOraiaadc
% facaWGtbGaaiikaiaadYgacaGGVaGaam4AaiaacMcacqGH9aqpcaWG
% 6bWaaSbaaSqaaiaadYgacaGGSaGaaGimaaqabaGccqGHRaWkcaaIYa
% WaaabCaeaacaWG6bWaaSbaaSqaaiaadYgacaGGSaGaamyCaaqabaGc
% ciGGJbGaai4BaiaacohacaGGOaGaaiikaiaadghacaGGPaaaleaaca
% WGXbGaeyypa0JaaGymaaqaamaagmaabaWaaSaaaeaacaWGSbaabaGa
% aGOmaaaaaiaawcp+caGL7Jpaa0GaeyyeIuoakmaalaaabaGaaGOmai
% aadUgacqaHapaCaeaacaWGSbaaaiaacMcacaGGUaaaaa!5DC1!
%\[
FPS(l/k) = z_{l,0}  + 2\sum\limits_{q = 1}^{\left\lfloor {\frac{l}
		{2}} \right\rfloor } {z_{l,q} \cos ((q)} \frac{{2k\pi }}
{l}).
%\]
% MathType!End!2!1!
\end{equation} 
  
Otherwise,
 \begin{equation} 
% MathType!Translator!2!1!AMS LaTeX.tdl!TeX -- AMS-LaTeX!
% MathType!MTEF!2!1!+-
% feaafiart1ev1aaatCvAUfeBSjuyZL2yd9gzLbvyNv2CaerbuLwBLn
% hiov2DGi1BTfMBaeXatLxBI9gBaerbd9wDYLwzYbItLDharqqtubsr
% 4rNCHbGeaGqiVu0Je9sqqrpepC0xbbL8F4rqqrFfpeea0xe9Lq-Jc9
% vqaqpepm0xbba9pwe9Q8fs0-yqaqpepae9pg0FirpepeKkFr0xfr-x
% fr-xb9adbaqaaeGaciGaaiaabeqaamaabaabaaGcbaGaamOraiaadc
% facaWGtbGaaiikaiaadYgacaGGVaGaam4AaiaacMcacqGH9aqpcaWG
% 6bWaaSbaaSqaaiaadYgacaGGSaGaaGimaaqabaGccqGHRaWkcaaIYa
% WaaabCaeaacaWG6bWaaSbaaSqaaiaadYgacaGGSaGaamyCaaqabaGc
% ciGGJbGaai4BaiaacohacaGGOaGaaiikaiaadghacaGGPaWaaSaaae
% aacaaIYaGaam4Aaiabec8aWbqaaiaadYgaaaGaaiykaiabgUcaRaWc
% baGaamyCaiabg2da9iaaigdaaeaadaWcaaqaaiaadYgaaeaacaaIYa
% aaaiabgkHiTiaaigdaa0GaeyyeIuoakiaaikdacaWG6bWaaSbaaSqa
% amaalaaabaGaamiBaaqaaiaaikdaaaGaaiilamaalaaabaGaamiBaa
% qaaiaaikdaaaaabeaakiGacogacaGGVbGaai4CaiaacIcadaGbdaqa
% amaalaaabaGaamiBaaqaaiaaikdaaaaacaGLWJVaay5+4dWaaSaaae
% aacaaIYaGaam4Aaiabec8aWbqaaiaadYgaaaGaaiykaiaac6caaaa!70B9!
%\[
FPS(l/k) = z_{l,0}  + 2\sum\limits_{q = 1}^{\frac{l}
	{2} - 1} {z_{l,q} \cos ((q)\frac{{2k\pi }}
	{l}) + } 2z_{\frac{l}
	{2},\frac{l}
	{2}} \cos (\left\lfloor {\frac{l}
	{2}} \right\rfloor \frac{{2k\pi }}
{l}).
%\]
% MathType!End!2!1!
\end{equation} 
\end{theorem}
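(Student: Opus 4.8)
The plan is to reduce everything to the quadratic expression $FPS(l/k)=Y(k)^{*}Y(k)$. By Theorem~2.1 and formula~(2.9) we have $FPS(l/k)=X(kn)^{*}X(kn)=Y(k)^{*}Y(k)$, and since $y$ is real, $Y(k)^{*}=\sum_{s=0}^{l-1}y_{s}e^{\,i2\pi ks/l}$. Multiplying the two sums produces the double sum $\sum_{s=0}^{l-1}\sum_{t=0}^{l-1}y_{s}y_{t}e^{\,i2\pi k(s-t)/l}$. The first step is to regroup this double sum by the residue $q\equiv s-t \pmod l$: for each ordered pair $(s,t)$ there is a unique $q\in\{0,1,\ldots,l-1\}$ with $s\equiv t+q\pmod l$, and on that class the exponent collapses to $e^{\,i2\pi kq/l}$ because $e^{-i2\pi k}=1$. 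Since $\sum_{t=0}^{l-1}y_{t}y_{t+q}=z_{l,q}$ with $t+q$ read modulo $l$ (this is exactly Definition~2.2), the double sum becomes the compact single sum
\[
FPS(l/k)=\sum_{q=0}^{l-1}z_{l,q}\,e^{\,i2\pi kq/l}.
\]

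The second step is to symmetrize this sum using $z_{l,q}=z_{l,l-q}$, which follows by reindexing $t\mapsto t-q$ in the defining sum (all indices modulo $l$). Pairing the $q$-term with the $(l-q)$-term turns $e^{\,i2\pi kq/l}+e^{\,i2\pi k(l-q)/l}$ into $2\cos(2k\pi q/l)$, a real quantity. If $l$ is odd, the indices $1,\ldots,l-1$ split exactly into the $\lfloor l/2\rfloor$ conjugate pairs $\{q,l-q\}$, leaving only the solitary term $z_{l,0}$; substituting gives formula~(2.11). If $l$ is even, the same pairing handles $q=1,\ldots,l/2-1$ and leaves $z_{l,0}$ together with the single self-paired index $q=l/2$. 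For that middle index I would invoke the identity $z_{l,l/2}=2\,z_{l/2,l/2}$ — obtained by cutting the length-$l$ sum $\sum_{t=0}^{l-1}y_{t}y_{t+l/2}$ into its first and second halves and reindexing the second half modulo $l$ — together with $e^{\,i2\pi k(l/2)/l}=\cos(\pi k)=\cos\!\big(\lfloor l/2\rfloor\,2k\pi/l\big)$; this reproduces the last term of formula~(2.12).

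The whole argument is essentially a bookkeeping exercise, so the delicate point is the modular reindexing in the first step: one must verify that collapsing the double sum over $(s,t)$ into a single sum over $q$ reproduces precisely the ``index taken modulo $l$'' convention built into $z_{l,q}$, and — in the even case — that the middle index $q=l/2$ is counted once rather than twice, which is the whole reason the constant there is $2z_{l/2,l/2}$ rather than $2z_{l,l/2}$. Once these two points are settled, separating the odd and even cases and substituting the pairing into the compact identity above yields (2.11) and (2.12) directly.
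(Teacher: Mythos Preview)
Your proof is correct and follows essentially the same approach as the paper: both expand $Y(k)^{*}Y(k)$ as a double sum over pairs $(s,t)$ and regroup by the index difference $s-t$ modulo $l$, then use conjugate symmetry to convert the exponentials into cosines. The paper packages this via the Hermitian Toeplitz matrix $B=e^{*}e^{T}$ of Theorem~2.3 and an explicit diagonal/lower-triangular/upper-triangular split, whereas you bypass the matrix language and reach the compact intermediate identity $FPS(l/k)=\sum_{q=0}^{l-1}z_{l,q}\,e^{i2\pi kq/l}$ directly before symmetrizing; the substance, including the even-case identity $z_{l,l/2}=2z_{l/2,l/2}$, is identical.
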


\begin{proof} %\renewcommand{\qedsymbol}{} %this is to remove the qed symbol
Recall that the entries of matrix $B$ satisfy $b_{rs}  = b_{sr} ^*  = b_{lr - s} ^*$  and matrix $B$ is a Toeplitz matrix. Notice the variables of the quadratic form (2.9) with the property: $y_r y_t = y_t y_r$, and the definitions of  $z_{l,q} $ and $z_{l/2,l/2} $. So the expansions of (2.9), $FPS (l/k)$, can be decomposed into three parts. The first part is the diagonal elements of matrix $B$ corresponding to the quadratic expression of the congruence derivative sequence, $y_{t}, t = 0, 1, \ldots, l - 1$, that is,
\begin{equation}
% MathType!Translator!2!1!AMS LaTeX.tdl!TeX -- AMS-LaTeX!
% MathType!MTEF!2!1!+-
% feaafiart1ev1aaatCvAUfeBSjuyZL2yd9gzLbvyNv2CaerbuLwBLn
% hiov2DGi1BTfMBaeXatLxBI9gBaerbd9wDYLwzYbItLDharqqtubsr
% 4rNCHbGeaGqiVu0Je9sqqrpepC0xbbL8F4rqqrFfpeea0xe9Lq-Jc9
% vqaqpepm0xbba9pwe9Q8fs0-yqaqpepae9pg0FirpepeKkFr0xfr-x
% fr-xb9adbaqaaeGaciGaaiaabeqaamaabaabaaGcbaGaamOEamaaBa
% aaleaacaWGSbGaaiilaiaaicdaaeqaaOGaeyypa0ZaaabCaeaacaWG
% 5bWaaSbaaSqaaiaadshaaeqaaOGaamyEamaaBaaaleaacaWG0baabe
% aaaeaacaWG0bGaeyypa0JaaGimaaqaaiaadYgacqGHsislcaaIXaaa
% niabggHiLdaaaa!465C!
%\[
z_{l,0}  = \sum\limits_{t = 0}^{l - 1} {y_t y_t } 
%\]
% MathType!End!2!1!
.
\end{equation}
The second part is the quadratic expression of the entries of lower triangular matrix of $B$. Without loss of generality, if $l$ is an odd number, we have the following.
\begin{equation}
\begin{gathered}
\sum\limits_{t = 0}^{l - 2} {y_{t + 1} y_t e^{\frac{{i2\pi k}}
		{l}} }  + y_{l - 1} y_0 e^{\frac{{i2\pi k(l - 1)}}
	{l}}  + \sum\limits_{t = 0}^{l - 3} {y_{t + 2} y_t e^{\frac{{i2\pi 2k}}
		{l}} }  + (y_{l - 2} y_0  + y_{l - 1} y_1 )e^{\frac{{i2\pi k(l - 2)}}
	{l}}  \hfill \\
+  \ldots  + \sum\limits_{t = 0}^{\frac{{l - 1}}
	{2}} {y_{t + \frac{{l - 1}}
		{2}} y_t e^{\frac{{i2\pi (\frac{{l - 1}}
				{2})k}}
		{l}} }  + \sum\limits_{t = 0}^{\frac{{l - 1}}
	{2} - 1} {y_{t + \frac{{l + 1}}
		{2}} y_t e^{\frac{{i2\pi (l - \frac{{l - 1}}
				{2})k}}
		{l}} } . \hfill \\ 
\end{gathered} 
%\]
% MathType!End!2!1!
\end{equation}
The third part is the quadratic expression of the entries of upper triangular matrix of $B$. According to conclusion of theorem 2.3 we may see that the third part is equal 
\begin{equation}
\begin{gathered}
\sum\limits_{t = 0}^{l - 2} {y_{t + 1} y_t e^{\frac{{ - i2\pi k}}
		{l}} }  + y_0 y_{l - 1} e^{\frac{{i2\pi k(1 - l)}}
	{l}}  + \sum\limits_{t = 0}^{l - 3} {y_t y_{t + 2} e^{\frac{{ - i2\pi 2k}}
		{l}} }  + (y_0 y_{l - 2}  + y_1 y_{l - 1} )e^{\frac{{i2\pi k(2 - l)}}
	{l}}  \hfill \\
+  \ldots  + \sum\limits_{t = 0}^{\frac{{l - 1}}
	{2}} {y_t y_{t + \frac{{l - 1}}
		{2}} e^{\frac{{ - i2\pi (\frac{{l - 1}}
				{2})k}}
		{l}} }  + \sum\limits_{t = 0}^{\frac{{l - 1}}
	{2} - 1} {y_t y_{t + \frac{{l + 1}}
		{2}} e^{\frac{{-i2\pi (l - \frac{{l - 1}}
				{2})k}}
		{l}} } . \hfill \\ 
\end{gathered} 
%\]
% MathType!End!2!1!
\end{equation}
The summation of formulas (2.13), (2.14), and (2.15) is the value
\[
\begin{gathered}
FPS(l/k) = z_{l,0}  + 2\cos (\frac{{2k\pi }}
{l})(y_0 y_1  + y_1 y_2  + y_2 y_3  + y_3 y_4  + y_4 y_5  +  \ldots  + y_0 y_{l - 1} ) \hfill \\
+ 2\cos ((2)\frac{{2k\pi }}
{l})(y_0 y_2  + y_1 y_3  + y_2 y_4  + y_3 y_5  + y_4 y_6  +  \ldots  + y_{l - 2} y_0  + y_{l - 1} y_1 ) \hfill \\
+ 2\cos ((3)\frac{{2k\pi }}
{l})(y_0 y_3  + y_1 y_4  + y_2 y_5  + y_3 y_6  + y_4 y_5  +  \ldots  + y_{l - 3} y_0  + y_{l - 2} y_1  + y_{l - 1} y_2 ) +  \cdots  \cdots  \hfill \\
+ 2\cos ((\frac{{l - 1}}
{2})\frac{{2k\pi }}
{l})(y_0 y_{(l - 1)/2}  + y_1 y_{(l + 1)/2}  + y_2 y_{(l + 3)/2}  \ldots  + y_{l - (l - 1)/2} y_0  +  \ldots  + y_{l - 1} y_{(l - 3)/2} ). \hfill \\ 
\end{gathered} 
\]
% MathType!End!2!1!

Using notation $z_{l,q }$ and write $(l - 1)/2 = \left\lfloor {l/2} \right\rfloor$, the above formula can be written as follows: 
when $l$ is an odd number,
% MathType!Translator!2!1!AMS LaTeX.tdl!TeX -- AMS-LaTeX!
% MathType!MTEF!2!1!+-
% feaafiart1ev1aaatCvAUfeBSjuyZL2yd9gzLbvyNv2CaerbuLwBLn
% hiov2DGi1BTfMBaeXatLxBI9gBaerbd9wDYLwzYbItLDharqqtubsr
% 4rNCHbGeaGqiVu0Je9sqqrpepC0xbbL8F4rqqrFfpeea0xe9Lq-Jc9
% vqaqpepm0xbba9pwe9Q8fs0-yqaqpepae9pg0FirpepeKkFr0xfr-x
% fr-xb9adbaqaaeGaciGaaiaabeqaamaabaabaaGcbaGaamOraiaadc
% facaWGtbGaaiikaiaadYgacaGGVaGaam4AaiaacMcacqGH9aqpcaWG
% 6bWaaSbaaSqaaiaadYgacaGGSaGaaGimaaqabaGccqGHRaWkcaaIYa
% WaaabCaeaacaWG6bWaaSbaaSqaaiaadYgacaGGSaGaamyCaaqabaGc
% ciGGJbGaai4BaiaacohacaGGOaGaaiikaiaadghacaGGPaaaleaaca
% WGXbGaeyypa0JaaGymaaqaamaagmaabaWaaSaaaeaacaWGSbaabaGa
% aGOmaaaaaiaawcp+caGL7Jpaa0GaeyyeIuoakmaalaaabaGaaGOmai
% aadUgacqaHapaCaeaacaWGSbaaaiaacMcacaGGSaaaaa!5DBF!
\[
FPS(l/k) = z_{l,0}  + 2\sum\limits_{q = 1}^{\left\lfloor {\frac{l}
		{2}} \right\rfloor } {z_{l,q} \cos ((q)} \frac{{2k\pi }}
{l}),
\]
% MathType!End!2!1!
similarly, when $l$ is an even number,
% MathType!Translator!2!1!AMS LaTeX.tdl!TeX -- AMS-LaTeX!
% MathType!MTEF!2!1!+-
% feaafiart1ev1aaatCvAUfeBSjuyZL2yd9gzLbvyNv2CaerbuLwBLn
% hiov2DGi1BTfMBaeXatLxBI9gBaerbd9wDYLwzYbItLDharqqtubsr
% 4rNCHbGeaGqiVu0Je9sqqrpepC0xbbL8F4rqqrFfpeea0xe9Lq-Jc9
% vqaqpepm0xbba9pwe9Q8fs0-yqaqpepae9pg0FirpepeKkFr0xfr-x
% fr-xb9adbaqaaeGaciGaaiaabeqaamaabaabaaGcbaGaamOraiaadc
% facaWGtbGaaiikaiaadYgacaGGVaGaam4AaiaacMcacqGH9aqpcaWG
% 6bWaaSbaaSqaaiaadYgacaGGSaGaaGimaaqabaGccqGHRaWkcaaIYa
% WaaabCaeaacaWG6bWaaSbaaSqaaiaadYgacaGGSaGaamyCaaqabaGc
% ciGGJbGaai4BaiaacohacaGGOaGaaiikaiaadghacaGGPaWaaSaaae
% aacaaIYaGaam4Aaiabec8aWbqaaiaadYgaaaGaaiykaiabgUcaRaWc
% baGaamyCaiabg2da9iaaigdaaeaadaWcaaqaaiaadYgaaeaacaaIYa
% aaaiabgkHiTiaaigdaa0GaeyyeIuoakiaaikdacaWG6bWaaSbaaSqa
% amaalaaabaGaamiBaaqaaiaaikdaaaGaaiilamaalaaabaGaamiBaa
% qaaiaaikdaaaaabeaakiGacogacaGGVbGaai4CaiaacIcacaGGOaWa
% aSaaaeaacaWGSbaabaGaaGOmaaaacaGGPaWaaSaaaeaacaaIYaGaam
% 4Aaiabec8aWbqaaiaadYgaaaGaaiykaiaac6caaaa!6CFA!
\[
FPS(l/k) = z_{l,0}  + 2\sum\limits_{q = 1}^{\frac{l}
	{2} - 1} {z_{l,q} \cos ((q)\frac{{2k\pi }}
	{l}) + } 2z_{\frac{l}
	{2},\frac{l}
	{2}} \cos ((\frac{l}
{2})\frac{{2k\pi }}
{l}).
\]
% MathType!End!2!1!
\end{proof}
%END OF PROOF
 
From the formulas (2.11 and 2.12), for $k = 1$, to compute $FPS (l/k)$, we only need to calculate $\left\lfloor {l/2} \right\rfloor +1$ quadratic expressions of $ y_0, y_1, \ldots, y_{l-1}$, when $l$ is an odd number, otherwise, $\left\lfloor {l/2} \right\rfloor+0.5$ quadratic expressions and their coefficients: $1, cos ((1)2\pi/l), cos ((2)2\pi/l),\ldots, cos(\left\lfloor {l/2} \right\rfloor 2\pi/l)$, respectively.

According to spectrum symmetry formula (2.8) to compute the Fourier power spectra of the integer period $l$ and all its associated fractional periods, i.e., calculating all $FPS (l/k), k = 1, 2, \ldots, (l-1)$, need to calculate $\left\lfloor {l/2} \right\rfloor$ Fourier power spectra. It will have much efficient way for computing them, as follows.  

%theorem 5
\begin{theorem} The $\left\lfloor {l/2} \right\rfloor-1$ Fourier power spectra of its associated fractional periods of the real sequence can be deduced by the results in computational procedure of $FPS$ at integer period $l$, $FPS(l)$.
\end{theorem}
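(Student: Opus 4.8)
The plan is to observe that the quantities $z_{l,0}, z_{l,1}, \ldots, z_{l,\lfloor l/2\rfloor}$ (together with $z_{l/2,l/2}$ when $l$ is even) are, by Theorem 2.5, exactly the data needed to write down $FPS(l/k)$ for \emph{every} $k$: formulas (2.11) and (2.12) express $FPS(l/k)$ as a fixed linear combination of these self-$q$-shift summations with $k$-dependent cosine coefficients $\cos((q)2k\pi/l)$. Consequently, once $FPS(l) = FPS(l/1)$ has been computed by the procedure of Theorem 2.5 — which by the remark after that theorem requires forming and storing all $\lfloor l/2\rfloor+1$ (resp.\ $\lfloor l/2\rfloor+\tfrac12$) quadratic expressions $z_{l,q}$ — no new inner products are needed. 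First I would state explicitly that the intermediate quantities produced while evaluating (2.11)/(2.12) at $k=1$ are precisely $\{z_{l,q}\}_{q=0}^{\lfloor l/2\rfloor}$ (and $z_{l/2,l/2}$), so they can be retained.

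Next I would carry out the counting. By the symmetry Theorem 2.2, $FPS(l/k)=FPS(l-k)$, so the full set of integer-plus-associated-fractional spectra consists of the $\lfloor l/2\rfloor$ distinct values $FPS(l/k)$ for $k=1,2,\ldots,\lfloor l/2\rfloor$. One of these, $k=1$, is $FPS(l)$ itself and is assumed already computed. The remaining $\lfloor l/2\rfloor-1$ spectra, for $k=2,3,\ldots,\lfloor l/2\rfloor$, are then obtained from the stored $z_{l,q}$ merely by re-weighting with the new cosine coefficients $\cos((q)2k\pi/l)$ and summing — that is, each is a single dot product of the already-available vector $(z_{l,0},z_{l,1},\ldots)$ against a vector of cosines. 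This establishes the claim: the $\lfloor l/2\rfloor-1$ fractional spectra are deduced from the by-products of the computation of $FPS(l)$, with no further quadratic (inner-product) work, only $O(l)$ extra multiply-adds per spectrum.

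I would close by noting the one bookkeeping subtlety, which is the only place the argument could trip: the odd and even cases differ in how the top shift $q=\lfloor l/2\rfloor$ enters. When $l$ is odd, $q=(l-1)/2$ contributes the ordinary term $2z_{l,q}\cos((q)2k\pi/l)$ as in (2.11); when $l$ is even, the top term is instead $2z_{l/2,l/2}\cos((l/2)2k\pi/l)$ as in (2.12), because the shift by $l/2$ pairs each index with itself once rather than twice. So the "stored data'' differs slightly between the two cases, but in both cases it is finite, fixed, and produced during the $k=1$ computation. I expect the main (indeed only) obstacle to be making this case distinction cleanly and confirming that the even-$l$ half-length summation $z_{l/2,l/2}$ is genuinely among the intermediate results of evaluating $FPS(l)$ via (2.12) — once that is checked, re-substituting $k=2,\ldots,\lfloor l/2\rfloor$ into the same formula finishes the proof.
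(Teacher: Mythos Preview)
Your argument is correct as far as it goes, but it takes a weaker route than the paper and misses the paper's central technical point. You treat the stored data from the $k=1$ computation as consisting only of the autocorrelations $z_{l,0},z_{l,1},\ldots,z_{l,\lfloor l/2\rfloor}$ (and $z_{l/2,l/2}$), and then say that for each $k\geq 2$ one simply recomputes \emph{new} cosine weights $\cos((q)2k\pi/l)$ and forms a dot product. That is valid, but the paper's proof asserts and proves more: the cosine values themselves need not be recomputed, because every coefficient $\cos((q)2k\pi/l)$ occurring for $2\le k\le\lfloor l/2\rfloor$ already appears in the list $\cos((0)2\pi/l),\cos((1)2\pi/l),\ldots,\cos((\lfloor l/2\rfloor)2\pi/l)$ produced at $k=1$. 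The mechanism is the modular reduction $p=kq \bmod l$ together with the evenness of cosine, $\cos(p\cdot 2\pi/l)=\cos((l-p)\cdot 2\pi/l)$, which maps any needed index into $\{0,1,\ldots,\lfloor l/2\rfloor\}$. This is exactly what drives the paper's Algorithm~2 (the remainder step and the branch $t=p$ versus $t=l-p$), and it is what the phrase ``the results in computational procedure of $FPS(l)$'' is meant to cover.

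So your proposal establishes a correct but coarser statement (``the $z$'s suffice, cosines are cheap''), whereas the paper establishes the finer one (``both the $z$'s \emph{and} the cosine table from $k=1$ suffice, via index remapping''). If you want your proof to match the theorem as the paper intends it, add the remainder/evenness argument showing that $\{\cos(qk\cdot 2\pi/l):0\le q\le\lfloor l/2\rfloor\}\subseteq\{\cos(t\cdot 2\pi/l):0\le t\le\lfloor l/2\rfloor\}$; your odd/even bookkeeping paragraph then carries over unchanged.
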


\begin{proof} %\renewcommand{\qedsymbol}{} %this is to remove the qed symbol
We can prove that the computing results at $k = 1$ is sufficient for the computing $FPS(l/k)$ for $k \ne 1$. For a fixed $k$, $k \ne 1$, from the result just mentioned from formula (2.8), the range of $k$ should be in $2 \leqslant k \leqslant \left\lfloor {l/2} \right\rfloor $. Based on the computing results at $k = 1$, We have obtained that $z_{l,0} ,z_{l,1} , \ldots ,z_{l,\left\lfloor {l/2} \right\rfloor  - 1} $ and their corresponding coefficients,
% MathType!Translator!2!1!AMS LaTeX.tdl!TeX -- AMS-LaTeX!
% MathType!MTEF!2!1!+-
% feaafiart1ev1aaatCvAUfeBSjuyZL2yd9gzLbvyNv2CaerbuLwBLn
% hiov2DGi1BTfMBaeXatLxBI9gBaerbd9wDYLwzYbItLDharqqtubsr
% 4rNCHbGeaGqiVCI8FfYJH8YrFfeuY-Hhbbf9v8qqaqFr0xc9pk0xbb
% a9q8WqFfeaY-biLkVcLq-JHqpepeea0-as0Fb9pgeaYRXxe9vr0-vr
% 0-vqpWqaaeaabiGaciaacaqabeaadaqaaqaaaOqaaiaaigdacaGGSa
% GaaGOmaiGacogacaGGVbGaai4CaiaacIcacaGGOaGaaGymaiaacMca
% daWcaaqaaiaaikdacqaHapaCaeaacaWGSbaaaiaacMcacaGGSaGaaG
% OmaiGacogacaGGVbGaai4CaiaacIcacaGGOaGaaGOmaiaacMcadaWc
% aaqaaiaaikdacqaHapaCaeaacaWGSbaaaiaacMcacaGGSaGaeSOjGS
% KaaiilaiaaikdaciGGJbGaai4BaiaacohacaGGOaGaaiikamaagmaa
% baGaamiBaiaac+cacaaIYaaacaGLWJVaay5+4dGaeyOeI0IaaGymai
% aacMcadaWcaaqaaiaaikdacqaHapaCaeaacaWGSbaaaiaacMcacaGG
% Saaaaa!62F7!
\[
1,2\cos ((1)\frac{{2\pi }}
{l}),2\cos ((2)\frac{{2\pi }}
{l}), \ldots ,2\cos ((\left\lfloor {l/2} \right\rfloor  - 1)\frac{{2\pi }}
{l}),
\]
% MathType!End!2!1!
respectively, and for $l$ is an odd number $z_{l,\left\lfloor {l/2} \right\rfloor }$ corresponds to coefficient $2\cos (\left\lfloor {l/2} \right\rfloor \frac{{2\pi }}{l})$ as well as the coefficient of $z_{l/2,l/2}$ is $2\cos (\left\lfloor {l/2} \right\rfloor \frac{{2\pi }}{l})$ when $l$ is an even number.

If we can represent any coefficient of $FPS (l/k)$ as one of
% MathType!Translator!2!1!AMS LaTeX.tdl!TeX -- AMS-LaTeX!
% MathType!MTEF!2!1!+-
% feaafiart1ev1aaatCvAUfeBSjuyZL2yd9gzLbvyNv2CaerbuLwBLn
% hiov2DGi1BTfMBaeXatLxBI9gBaerbd9wDYLwzYbItLDharqqtubsr
% 4rNCHbGeaGqiVu0Je9sqqrpepC0xbbL8F4rqqrFfpeea0xe9Lq-Jc9
% vqaqpepm0xbba9pwe9Q8fs0-yqaqpepae9pg0FirpepeKkFr0xfr-x
% fr-xb9adbaqaaeGaciGaaiaabeqaamaabaabaaGcbaGaaiikaiaaig
% dacaGGSaGaci4yaiaac+gacaGGZbGaaiikaiaacIcacaaIXaGaaiyk
% amaalaaabaGaaGOmaiabec8aWbqaaiaadYgaaaGaaiykaiaacYcaci
% GGJbGaai4BaiaacohacaGGOaGaaiikaiaaikdacaGGPaWaaSaaaeaa
% caaIYaGaeqiWdahabaGaamiBaaaacaGGPaGaaiilaiablAciljaacY
% caciGGJbGaai4BaiaacohacaGGOaGaaiikamaagmaabaGaamiBaiaa
% c+cacaaIYaaacaGLWJVaay5+4dGaeyOeI0IaaGymaiaacMcadaWcaa
% qaaiaaikdacqaHapaCaeaacaWGSbaaaiaacMcacaGGSaGaci4yaiaa
% c+gacaGGZbGaaiikaiaacIcadaGbdaqaaiaadYgacaGGVaGaaGOmaa
% Gaayj84laawUp+aiaacMcadaWcaaqaaiaaikdacqaHapaCaeaacaWG
% SbaaaiaacMcacaGGPaaaaa!729E!
\[
(1,\cos ((1)\frac{{2\pi }}
{l}),\cos ((2)\frac{{2\pi }}
{l}), \ldots ,\cos ((\left\lfloor {l/2} \right\rfloor  - 1)\frac{{2\pi }}
{l}),\cos ((\left\lfloor {l/2} \right\rfloor )\frac{{2\pi }}
{l}))
\]
% MathType!End!2!1!
when we calculate $FPS(l/k), 2 \leqslant k \leqslant \left\lfloor {l/2} \right\rfloor $, then we may complete the proof, no matter $l$ is odd number or even number.
%\end{proof}

For a fixed $k,2 \leqslant k \leqslant \left\lfloor {l/2} \right\rfloor $, to compute $FPS(l/k)$, we need to get the coefficients $1,\cos ((1)\frac{{k2\pi }}
{l}),\cos ((2)\frac{{k2\pi }}
{l}), \ldots ,\cos ((\left\lfloor {l/2} \right\rfloor  - 1)\frac{{k2\pi }}
{l}),\cos (\left\lfloor {l/2} \right\rfloor \frac{{k2\pi }}
{l})$, i.e., $\cos ((q)\frac{{k2\pi }}
{l}), q = 0,1,2, \ldots ,\left\lfloor {l/2} \right\rfloor$.

Suppose $p$ is the remainder $(kq,l)$, then $\cos ((q)\frac{{k2\pi }}
{l}) = \cos ((t)\frac{{2\pi }}
{l})$ if $p < \left\lfloor {l/2} \right\rfloor ,t = p$, otherwise, notice that $\cos ((r - s)\frac{{k2\pi }}
{l}) = \cos ((l - (r - s)\frac{{k2\pi }}
{l})$, set $t = l-p$. Then $\cos ((t)\frac{{2\pi }}{l})$ is one of the following
% MathType!Translator!2!1!AMS LaTeX.tdl!TeX -- AMS-LaTeX!
% MathType!MTEF!2!1!+-
% feaafiart1ev1aaatCvAUfeBSjuyZL2yd9gzLbvyNv2CaerbuLwBLn
% hiov2DGi1BTfMBaeXatLxBI9gBaerbd9wDYLwzYbItLDharqqtubsr
% 4rNCHbGeaGqiVu0Je9sqqrpepC0xbbL8F4rqqrFfpeea0xe9Lq-Jc9
% vqaqpepm0xbba9pwe9Q8fs0-yqaqpepae9pg0FirpepeKkFr0xfr-x
% fr-xb9adbaqaaeGaciGaaiaabeqaamaabaabaaGcbaGaci4yaiaac+
% gacaGGZbGaaiikaiaacIcacaaIWaGaaiykamaalaaabaGaaGOmaiab
% ec8aWbqaaiaadYgaaaGaaiykaiaacYcaciGGJbGaai4Baiaacohaca
% GGOaGaaiikaiaaigdacaGGPaWaaSaaaeaacaaIYaGaeqiWdahabaGa
% amiBaaaacaGGPaGaaiilaiGacogacaGGVbGaai4CaiaacIcacaGGOa
% GaaGOmaiaacMcadaWcaaqaaiaaikdacqaHapaCaeaacaWGSbaaaiaa
% cMcacqWIMaYscaGGSaGaci4yaiaac+gacaGGZbGaaiikaiaacIcada
% GbdaqaaiaadYgacaGGVaGaaGOmaaGaayj84laawUp+aiabgkHiTiaa
% igdacaGGPaWaaSaaaeaacaaIYaGaeqiWdahabaGaamiBaaaacaGGPa
% GaaiilaiGacogacaGGVbGaai4CaiaacIcacaGGOaWaayWaaeaacaWG
% SbGaai4laiaaikdaaiaawcp+caGL7JpacaGGPaWaaSaaaeaacaaIYa
% GaeqiWdahabaGaamiBaaaacaGGPaGaaiOlaaaa!7A45!
\[
\cos ((0)\frac{{2\pi }}
{l}),\cos ((1)\frac{{2\pi }}
{l}),\cos ((2)\frac{{2\pi }}
{l}) \ldots ,\cos ((\left\lfloor {l/2} \right\rfloor  - 1)\frac{{2\pi }}
{l}),\cos ((\left\lfloor {l/2} \right\rfloor )\frac{{2\pi }}
{l}).
\]
% MathType!End!2!1!

It proved that all the coefficients of $FPS(l/k)$ may one-by-one come from % % MathType!Translator!2!1!AMS LaTeX.tdl!TeX -- AMS-LaTeX!
% MathType!MTEF!2!1!+-
% feaafiart1ev1aaatCvAUfeBSjuyZL2yd9gzLbvyNv2CaerbuLwBLn
% hiov2DGi1BTfMBaeXatLxBI9gBaerbd9wDYLwzYbItLDharqqtubsr
% 4rNCHbGeaGqiVu0Je9sqqrpepC0xbbL8F4rqqrFfpeea0xe9Lq-Jc9
% vqaqpepm0xbba9pwe9Q8fs0-yqaqpepae9pg0FirpepeKkFr0xfr-x
% fr-xb9adbaqaaeGaciGaaiaabeqaamaabaabaaGcbaGaci4yaiaac+
% gacaGGZbGaaiikaiaacIcacaaIWaGaaiykamaalaaabaGaaGOmaiab
% ec8aWbqaaiaadYgaaaGaaiykaiaacYcaciGGJbGaai4Baiaacohaca
% GGOaGaaiikaiaaigdacaGGPaWaaSaaaeaacaaIYaGaeqiWdahabaGa
% amiBaaaacaGGPaGaaiilaiGacogacaGGVbGaai4CaiaacIcacaGGOa
% GaaGOmaiaacMcadaWcaaqaaiaaikdacqaHapaCaeaacaWGSbaaaiaa
% cMcacqWIMaYscaGGSaGaci4yaiaac+gacaGGZbGaaiikaiaacIcada
% GbdaqaaiaadYgacaGGVaGaaGOmaaGaayj84laawUp+aiabgkHiTiaa
% igdacaGGPaWaaSaaaeaacaaIYaGaeqiWdahabaGaamiBaaaacaGGPa
% GaaiilaiGacogacaGGVbGaai4CaiaacIcadaGbdaqaaiaadYgacaGG
% VaGaaGOmaaGaayj84laawUp+amaalaaabaGaaGOmaiabec8aWbqaai
% aadYgaaaGaaiykaiaac6caaaa!78EC!
\[
\cos ((0)\frac{{2\pi }}
{l}),\cos ((1)\frac{{2\pi }}
{l}),\cos ((2)\frac{{2\pi }}
{l}) \ldots ,\cos ((\left\lfloor {l/2} \right\rfloor  - 1)\frac{{2\pi }}
{l}),\cos (\left\lfloor {l/2} \right\rfloor \frac{{2\pi }}
{l}).
\]
% MathType!End!2!1!
\end{proof}
\section{Algorithms and computation}
\subsection{Algorithms}
%021519 OK, code file:Examples_FPSFromSignal_01302019.m
For computing $FPS(l/k)$, $1 \leqslant k \leqslant \left\lfloor {l/2} \right\rfloor$, three functions are designed. (1) function $ y(x,l)$ is to construct the congruence derivative sequence $y = [y_{0,} y_1 , \ldots ,y_{l - 1} ]^t $ of the original sequence $x$, (2) function $Z(y,q)$ is to yield the autocorrelation of vector $y$ for shift $q$ (formula (2.6)), and (3) function $Z(\frac{l}
{2},\frac{l}
{2})$, 
% MathType!Translator!2!1!AMS LaTeX.tdl!TeX -- AMS-LaTeX!
% MathType!MTEF!2!1!+-
% feaafiart1ev1aaatCvAUfeBSjuyZL2yd9gzLbvyNv2CaerbuLwBLn
% hiov2DGi1BTfMBaeXatLxBI9gBaerbd9wDYLwzYbItLDharqqtubsr
% 4rNCHbGeaGqiVu0Je9sqqrpepC0xbbL8F4rqqrFfpeea0xe9Lq-Jc9
% vqaqpepm0xbba9pwe9Q8fs0-yqaqpepae9pg0FirpepeKkFr0xfr-x
% fr-xb9adbaqaaeGaciGaaiaabeqaamaabaabaaGcbaGaaeOwamaaBa
% aaleaacaqGSbGaae4laiaabkdacaqGSaGaaeiBaiaab+cacaqGYaaa
% beaakiabg2da9maaqahabaGaamyEamaaBaaaleaacaWG0baabeaaki
% aadMhadaWgaaWcbaGaamiDaiabgUcaRiaadYgacaGGVaGaaGOmaaqa
% baaabaGaamiDaiabg2da9iaaicdaaeaacaWGSbGaai4laiaaikdacq
% GHsislcaaIXaaaniabggHiLdGccaGGSaaaaa!4EA5!
\[
{\text{Z}}_{{\text{l/2,l/2}}}  = \sum\limits_{t = 0}^{l/2 - 1} {y_t y_{t + l/2} } ,
\]
% MathType!End!2!1!
when $l$ is an even number.   

%021519 OK, code file:Examples_FPSFromSignal_01302019.m
To speed up the computation, the algorithm shall use a built-in cosine series, $Cos(q,l)$, which is defined as 
% MathType!Translator!2!1!AMS LaTeX.tdl!TeX -- AMS-LaTeX!
% MathType!MTEF!2!1!+-
% feaafiart1ev1aaatCvAUfeBSjuyZL2yd9gzLbvyNv2CaerbuLwBLn
% hiov2DGi1BTfMBaeXatLxBI9gBaerbd9wDYLwzYbItLDharqqtubsr
% 4rNCHbGeaGqiVu0Je9sqqrpepC0xbbL8F4rqqrFfpeea0xe9Lq-Jc9
% vqaqpepm0xbba9pwe9Q8fs0-yqaqpepae9pg0FirpepeKkFr0xfr-x
% fr-xb9adbaqaaeGaciGaaiaabeqaamaabaabaaGcbaGaam4qaiaac+
% gacaGGZbGaaiikaiaadghacaGGSaGaamiBaiaacMcacqGH9aqpciGG
% JbGaai4BaiaacohacaGGOaGaamyCamaalaaabaGaaGOmaiabec8aWb
% qaaiaadYgaaaGaaiykaiaacYcacaWGXbGaeyypa0JaaGimaiaacYca
% caaIXaGaaiilaiaaikdacaGGSaGaeSOjGSKaaiilamaagmaabaWaaS
% aaaeaacaWGSbaabaGaaGOmaaaaaiaawcp+caGL7JpacaGGSaaaaa!5675!
\[
Cos(q,l) = \cos (q\frac{{2\pi }}
{l}),q = 0,1,2, \ldots ,\left\lfloor {\frac{l}
	{2}} \right\rfloor ,
\]
% MathType!End!2!1!
and the indices of columns are $l=2,3,4,5,\ldots$ .

Algorithm 1. To compute the $FPS$ at period $l$ for a real sequence, length of $m = nl$.

Input sequence $x$;

Call $y(x,l)$, obtained congruence derivative sequence y;

Set $FPS(l) := Z(0,y)$, then

for $t=1$ to ($\left\lfloor {l/2} \right\rfloor - 1$), do $FPS(l) := FPS(l) + 2Z(t,y) Cos(t,l)$;
 
Then if $l$ is an odd number do $FPS(l) := FPS(l) + 2Z(\left\lfloor {l/2} \right\rfloor,l) Cos(\left\lfloor {l/2} \right\rfloor,l)$; otherwise, $FPS(l) := FPS(l) + 2Cos(\left\lfloor {l/2} \right\rfloor,l)Z(l/2,l/2)$.  

Algorithm 2. After running algorithm 1, we have obtained the congruence derivative sequence $y$ of the sequence $x$ and $Z(q,l)$, $q =0,1, 2, \ldots, \left\lfloor {l/2} \right\rfloor$. To calculate the Fourier power spectra of the fractional periods $FPS(l/k)$, $2 \leqslant k \leqslant \left\lfloor {l/2} \right\rfloor -1$, and $Z(l/2,l]$ when $l$ is an odd number, otherwise, $Z(l/2,l/2]$. 

For a fixed $k$, set$ FPS(l/k) := Z(0,l)$,

For $q = 1$ to $q = \left\lfloor {l/2} \right\rfloor$ do $p = remainder (kq,l)$;

If $p \leqslant \left\lfloor {l/2} \right\rfloor $, then $t = p$ else $t=l - p$ ;

If $l$ is an odd number, then $FPS(l/k):= FPS(l/k) +2Cos (t,l) Z(q,l)$, until $q = \left\lfloor {l/2} \right\rfloor$;

Otherwise, if $l$ is an even number, if $q < \left\lfloor {l/2} \right\rfloor$, then $FPS(l/k) := FPS(l/k)+2Cos (t,l) Z(q,l)$, else $FPS(l/k) := FPS(l/k) +2Cos (t,l)Z(l/2,l/2)$.

\subsection{Computational experiments}
We chose protein sesquiterpene synthase with $\alpha$-helix structure (PDB:4GAX) for computational experiment \citep{li2013rational}. Using hydrophobicity representation, the symbolic protein sequence is mapped to a numerical sequence. Then, integer periods $l = 18$ and $l = 36$ in the numerical sequence are examined, respectively. The computational results is listed as follows. For  $l = 18$, $FPS(18/1) = 563.84$, $FPS(18/2) = 5267.3$, $FPS(18/3) = 3936.1$, $FPS(18/4) = 1116.7$, $FPS(18/5) = 21864$, $FPS(18/6) = 4381.2$, $FPS(18/7) = 939.8$, $FPS(18/8) = 2165$, and $FPS(18/9) = 1681$, maximum $FPS$ is 21864 which appears at period, $l/k = 18/5 = 3.6 $. It shows $\alpha$-helix secondary structure with period 3.6. 

For $l = 36$, $FPS(36/k), k = 1, 2, \ldots, 18$, are 9718.7, 563.84, 13134, 5267.3, 13318, 3936.1, 10628, 1116.7, 9272.5, 21864, 1190.8, 4381.2, 14623, 939.8, 7902.5, 2165, 616.43, and 1681, respectively. Also the maximum $FPS$ is 21864 at period $l/k = 36/10 = 3.6$.

It is easy to see, for example, these spectra: $FPS(18/4)$, $FPS(18/5)$, $FPS(18/7)$ and $FPS(18/8)$, can${\text{'}}$t be obtained from the traditional DFT. If chose $l$ is a prime number which less then and is not a factor of the original sequence length, the $FPS$ of the $l$ associated fractional periods can${\text{'}}$t be obtained from the traditional DFT. 

\begin{figure}[tbp]
	\centering
	\subfloat[]{\includegraphics[width=3.5in]{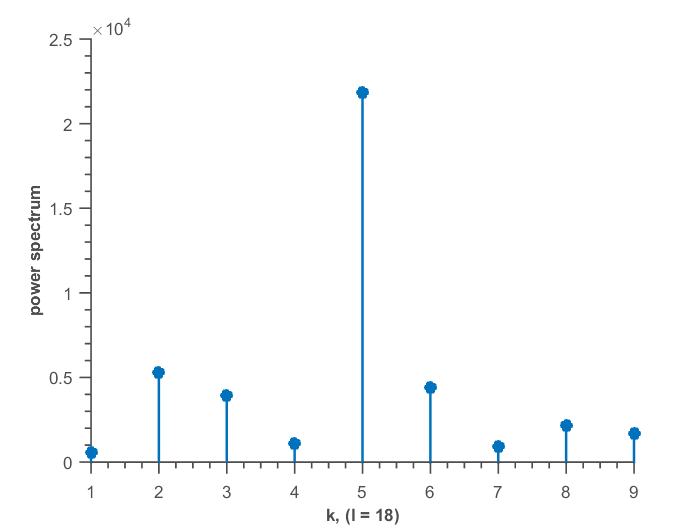}}\quad
	\subfloat[]{\includegraphics[width=3.5in]{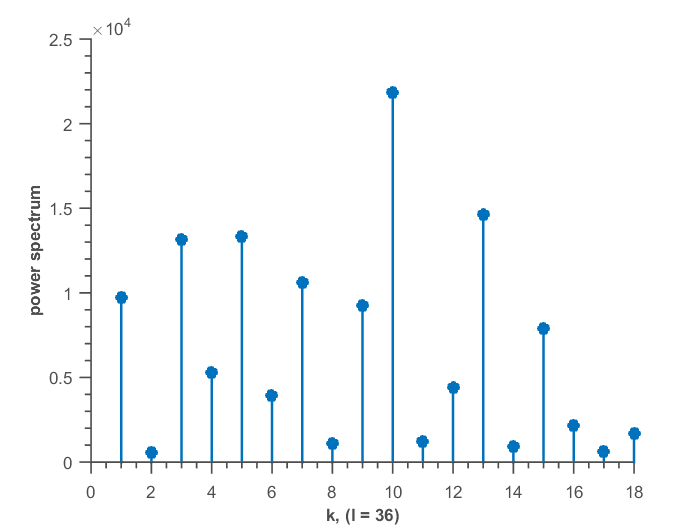}}\quad
	\caption{Fractional periodic spectrum analysis of protein sequence (PDB:4GAX). (a) $l = 18$, (b) $l = 36$.}
\end{figure}    

\section{Conclusion and prospect}
%021819
Over the years, we have used Fourier transform to study the structures, functions, interactions, and evolution of biological sequences, especially, for the periodicity's research of the sequences we suggested PPS method which instead of traditional DFT to help the studying of those sequences. Based on this investigation, in this paper, we suggest the algorithms of directly computing Fourier power spectra for an integer period and its associated fractional periods for a real sequence. The algorithm is out of the ordinary that after figure out the Fourier power spectra of the integer period then very easily and efficiently get the power spectra of its associated fractional periods of the sequence because the mathematical proof ensures the validity of the algorithm. 

The computational experiences show that it not only proves the efficient of the algorithm, but also suggests a new DFT computational realization method.

There are still rooms for improving the algorithm, we will continue to study the mathematical principles of the method. Their wide applications are more worthy of discoveries and researches in the future.

\newpage
%\section*{References}
%% Authors are advised to submit their bibtex database files. They are
%% requested to list a bibtex style file in the manuscript if they do
%% not want to use elsarticle-harv.bst.

\bibliographystyle{elsarticle-harv}
\bibliography{../References/myRefs}
%\bibliography{myRefs}
%% References without bibTeX database:
% \begin{thebibliography}{00}
%% \bibitem must have one of the following forms:
%%   \bibitem[Jones et al.(1990)]{key}...
%%   \bibitem[Jones et al.(1990)Jones, Baker, and Williams]{key}...
%%   \bibitem[Jones et al., 1990]{key}...
%%   \bibitem[\protect\citepauthoryear{Jones, Baker, and Williams}{Jones
%%       et al.}{1990}]{key}...
%%   \bibitem[\protect\citepauthoryear{Jones et al.}{1990}]{key}...
%%   \bibitem[\protect\astroncitep{Jones et al.}{1990}]{key}...
%%   \bibitem[\protect\citepname{Jones et al., }1990]{key}...
%%   \harvarditem[Jones et al.]{Jones, Baker, and Williams}{1990}{key}...
%%

% \bibitem[ ()]{}
% \end{thebibliography}
\end{document}